\documentclass[11pt]{article}

%---------------------------------------------------------------------------------------------------------------------------------------------------

\usepackage{makeidx}
\usepackage{euscript}
\usepackage{amsmath,amssymb,amsfonts}
\usepackage{dsfont}
\usepackage{amsmath, amssymb}
\usepackage{latexsym}
\usepackage{subfigure}
\usepackage{graphicx}
\usepackage{url}
\usepackage{times}
\usepackage[scaled=0.92]{helvet} 
\usepackage{euler}
\usepackage{fancybox}
\usepackage{fullpage}
\usepackage{hyperref}
\usepackage[noend]{algorithmic}
\usepackage{algorithm}
\usepackage{color}
\usepackage{enumitem}
\usepackage{tcolorbox}

\usepackage{wrapfig}

%---------------------------------------------------------------------------------------------------------------------------------------------------

\newenvironment{proof}{\noindent{\bf Proof : \ }}{\hfill$\Box$\par\medskip}

\newtheorem{theorem}{Theorem}
\newtheorem{corollary}[theorem]{Corollary}
\newtheorem{lemma}[theorem]{Lemma}

\newtheorem{definition}[theorem]{Definition}

\newtheorem{claim}[theorem]{Claim}

\newenvironment{proofof}[1]{\begin{trivlist} \item {\bf Proof
#1:~~}}
  {\qed\end{trivlist}}

%---------------------------------------------------------------------------------------------------------------------------------------------------

\renewcommand{\Pr}[1]{\ensuremath{\mathbf{Pr}[#1]}}

\renewcommand{\paragraph}[1]{ \noindent {\bf #1}}

%---------------------------------------------------------------------------------------------------------------------------------------------------

\newcommand{\COMMENTED}[1]{{}}
\newcommand{\NAT}{\ensuremath{\mathbb{N}}}
\newcommand{\NATURAL}{\NAT}

\newcommand{\Ex}[1]{\ensuremath{\mathbf{E}[#1]}}

\newcounter{proccnt}

%\date{\today}
\date{}

%\renewcommand{\baselinestretch}{0.95}

%===============================================================
% Sariel's tricks

\newlength{\savedparindent}
\newcommand{\SaveIndent}{\setlength{\savedparindent}{\parindent}}
\newcommand{\RestoreIndent}{\setlength{\parindent}{\savedparindent}}

\newcommand{\InGrayMiddle}[1]{%
\SaveIndent{} %
\centerline{ \fcolorbox[rgb]{0,0,0}{0.95,0.95,0.95}{
\begin{minipage}{0.85\linewidth} %
\RestoreIndent{}%
#1
\end{minipage}
} } }

%===============================================================

%---------------------------------------------------------------------------------------------------------------------------------------------------

\title{Dynamic Maximal Independent Set}

%---------------------------------------------------------------------------------------------------------------------------------------------------

\author{
%Morteza Monemizadeh\thanks{Samsung Research America. {\tt m.monemizade@samsung.com}.}
Morteza Monemizadeh\thanks{Work was done while the author was at Amazon AI, Palo Alto, CA, USA.  Email: {\tt m.monemizadeh@gmail.com}.}
}

\begin{document}

\sloppy
\setlength{\abovecaptionskip}{0.1ex}
 \setlength{\belowcaptionskip}{0.1ex}
 \setlength{\floatsep}{0.1ex}
 \setlength{\textfloatsep}{0.1ex}

 \abovedisplayskip.30ex
   \belowdisplayskip.30ex
   \abovedisplayshortskip.30ex
   \belowdisplayshortskip.30ex

%\maketitle
%\begin{titlepage}
\maketitle\thispagestyle{empty}

\begin{abstract}
Given a stream $\mathcal{S}$ of insertions and deletions of edges of an underlying graph $G$ (with fixed vertex set $V$ where $n=|V|$ is the number of vertices of $G$), 
we propose a dynamic algorithm that  maintains a maximal independent set (MIS) of $G$ (at any time $t$ of the stream $\mathcal{S}$) 
with amortized update time $O(\log^3 n)$. 
\end{abstract}

%\end{titlepage}

%%%%%%%%%%%%%%%%%%%%%%%%%%%%%%%%%%%%%%%%%%%%%%%%%%%%%%%%%%%%%%%%

\section{Introduction}
Very recently at STOC 2018, Assadi,  Onak, Schieber, and Solomon \cite{DBLP:conf/stoc/AssadiOSS18} proposed 
a deterministic dynamic algorithm for maintaining a
maximal independent set (MIS) with amortized update time $O(min(\Delta,m^{3/4}))$, where $\Delta$ is a fixed bound on the maximum degree 
in the graph and $m$ is the (dynamically changing) number of edges. 
Later, Gupta and Khan \cite{DBLP:journals/corr/abs-1804-01823} and independently, Du and Zhang \cite{DBLP:journals/corr/abs-1804-08908} presented deterministic algorithms for 
dynamic MIS with update times of $O(m^{2/3})$ and $O(m^{2/3}\cdot \sqrt{
\log m})$, respectively. Du and Zhang also gave a randomized algorithm with update time $\tilde{O}(\sqrt{m})$.
Later at SODA 2019, Assadi,  Onak, Schieber, and Solomon \cite{DBLP:conf/soda/AssadiOSS19} developed 
the first fully dynamic (randomized) algorithm for maintaining a MIS with $\min(\tilde{O}(\sqrt{n}),\tilde{O}(m^{1/3}))$ expected amortized update time. 

Here we develop the first randomized dynamic algorithm for MIS with amortized update time $O(\log^3 n)$. 
Our main result is stated in the following theorem. 

\begin{theorem}
\label{thm:dynamic:mis}
Let $S$ be a stream of insertions and deletions of edges of an underlying unweighted graph $G$ with a fixed vertex set $V$ of size $n=|V|$. 
Then, there exists a randomized dynamic algorithm that maintains a maximal independent set of $G$ using amortized $O(\log^3 n)$ update time. 
\end{theorem}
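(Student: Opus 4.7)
The plan is to maintain the \emph{random-greedy MIS} with respect to a uniformly random rank $\pi : V \to [0,1]$ sampled once at initialization. Given $\pi$, the random-greedy MIS $I_\pi(G)$ is uniquely determined by the rule that $v \in I_\pi(G)$ iff no neighbor $u \in N(v)$ with $\pi(u) < \pi(v)$ belongs to $I_\pi(G)$, equivalently the output of greedily scanning vertices in increasing $\pi$-order. The algorithm maintains the invariant $\mathrm{MIS}(G_t) = I_\pi(G_t)$ after every edge insertion or deletion.

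For each vertex $v$ I would maintain (i) a balanced BST storing $N(v)$ keyed by $\pi$, and (ii) a pointer $\mu(v)$ to the minimum-rank MIS neighbor of $v$, or $\bot$ if none; by independence and the random-greedy rule, $v \in I_\pi(G)$ iff $\mu(v) = \bot$. On an update to edge $(u,v)$, I would first update the two BSTs in $O(\log n)$ time, then run a propagation: whenever a vertex $w$ flips status, I walk through $N(w)$ in $\pi$-order using the BST and, for each neighbor $w'$ with $\pi(w') > \pi(w)$ whose $\mu(w')$ could have been affected, refresh $\mu(w')$ and possibly flip $w'$'s status, recursing. Each visited vertex incurs $O(\log n)$ of BST work, so the amortized cost per update is $O(\log n)$ times the expected cascade size.

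The analysis then reduces to bounding the expected number of vertices visited per update by $O(\log^2 n)$. Orient each edge of $G$ from its lower-rank endpoint to its higher-rank endpoint; the vertices whose $I_\pi$-membership can possibly flip after updating $(u,v)$ lie in the rank-increasing reachable set of $\{u,v\}$ in this oriented graph. I would combine the classical random-greedy MIS fact that the expected longest rank-increasing path is $O(\log n)$ with a layered argument showing that at each ``level'' of this reachable set only $O(\log n)$ vertices are expected to switch status. Their product yields expected cascade size $O(\log^2 n)$, and hence expected work $O(\log^3 n)$ per update.

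The main obstacle is transferring this one-shot bound to the fully dynamic, adversarial setting. Because $\pi$ is fixed independently of the stream (oblivious adversary), conditional on $G_t$ the maintained set coincides with the random-greedy MIS of $G_t$, so per-update expected bounds do apply. What I actually need, however, is an amortized bound; the delicate part is a charging argument that each visited vertex is paid for by a ``witness'' edge in its rank-prefix whose total charge over the stream is $O(\log^2 n)$ in expectation, together with careful management of the lazy refresh of $\mu$-pointers so that stale values cannot artificially inflate the cascade. Making this charging go through, rather than simply bounding work per update in isolation, is where the main technical effort will lie.
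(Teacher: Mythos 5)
Your proposal takes a genuinely different route from the paper. The paper does not use a fixed random permutation and the random-greedy MIS at all; instead it builds a hierarchical \emph{level decomposition} $L_1,\dots,L_k$ with $k=O(\log n)$, where at level $i$ vertices of the residual graph $G_i$ are subsampled with probability $\Theta(n_i/m_i)$, a greedy MIS $I_i$ of the sparse sample is taken, and $I_i\cup N(I_i)$ is peeled off. The crucial structural facts are (i) after level $i$ every surviving vertex has degree $O(\log n\cdot m_i/n_i)$ with high probability, and (ii) a random edge hits two MIS vertices at levels $i$ and $j$ with probability only $O\bigl((n_i/m_i)(n_j/m_j)\bigr)$. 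Multiplying the rarity of a ``heavy'' update by the degree bound on the blast radius gives $O(\log^2 n)$ expected work per update, and repeating $O(\log n)$ runs in parallel for concentration gives the final $O(\log^3 n)$. So the paper's argument is ``rare but expensive'' amortization over levels, not a cascade bound in a single permutation ordering.

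Beyond being a different design, your sketch as stated has real gaps. First, ``the expected longest rank-increasing path is $O(\log n)$'' is not a classical fact and is false for worst-case graphs under a random permutation; the $O(\log n)$ bound from the Censor-Hillel--Haramaty--Karnin line of work is on the expected number of vertices whose $I_\pi$-\emph{membership flips} after a single update, which is a very different (and harder) statement than a path-length bound. Second, even granting both inner bounds, multiplying ``$O(\log n)$ expected path depth'' by ``$O(\log n)$ expected flips per level'' is not legitimate: these are dependent random quantities, and a bound on expected total flips does not bound the expected number of vertices you must \emph{inspect} to discover the flips, which is the true cost. Third, as you yourself note, the reduction from per-update expectation (against an oblivious adversary) to an amortized bound over the stream is not supplied, and you have not said what the witness/charging scheme actually is. Any of these three points by itself makes the argument incomplete; together they leave the central quantitative claim ($O(\log^2 n)$ expected cascade work) unproven. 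If you want to pursue this route, you need to precisely define the inspected set (not the flipped set), prove an expectation bound on its size that holds conditioned on the current graph, and give the charging argument explicitly — this is exactly the hard part that the paper avoids by using its level structure.
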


\paragraph{Overview of Algorithm.}
To prove this theorem we first devise an offline MIS algorithm in Section \ref{sec:mis} and then 
in Section \ref{sec:insert:delete} we show how to implement steps of this offline algorithm in a streaming fashion 
while maintaining a maximal independent set using fast update time. 
In Section \ref{sec:main:alg} we give our dynamic algorithm that handles insertions and deletions. 
First we explain the offline algorithm. 

Let $G(V,E)$ be an undirected unweighted graph with $n = |V|$ vertices and $m = |E|$ edges. 
We consider $k$ epochs during which we build levels $L_1, \cdots, L_k$ of independent vertices for $k = O(\log n)$. 
At the beginning of epoch $i$ we assume we have a graph $G_i(V_i,E_i)$. 
For the first epoch, we let $G_1(V_1, E_1) = G(V,E)$. 

At epoch $i$, we repeat the following sampling process for $s = O(\frac{n_i^2}{m_i})$ times: 
Repeat sampling (with replacement) a vertex $w \in V_i$ uniformly at random as long as $I_i \cup \{w\}$ is not an independent set in $G_i$. 
Once we sample a vertex $w$ for which $I_i \cup \{w\}$ is an independent set, 
we then let $I_i = I_i \cup \{w\}$ and $N(I_i) = N(I_i)  \cup N_{G_i}(w)$. 
If the graph $G_i$ is sparse (i.e., $|E_I| \le |V_i|$), we sample all vertices and the ordered set 
$S_i$ will be a random shuffle of vertices of $V_i$. 

%we sample an ordered set $S_i \subseteq V_i$ of vertices with probability $ \frac{|V_i|}{|E_i|}$ 
%and we let $I_i$ be a MIS that we find greedily for the induced sub-graph $H(S_i,E[S_i])$. 

We let $N(I_i)$ be the set of neighbors of $I_i$ in the graph $G_i(V_i,E_i)$ and we remove $I_i$ and $N(I_i)$ from $G_i(V_i,E_i)$. 
The level $L_i$ consists of the vertex set $V_i$,  the multiplicative inverse or reciprocal for the average degree of $G_i$ which is 
$\frac{|V_i|}{|E_i|}$, and the independent set $I_i$ and its neighbor set $N(I_i)$.  
We remove $I_i$ and $N(I_i)$ from the graph $G_i$ and recursively start the next epoch $E_{i+1}$. 

Next we explain the idea behind our edge insertion and deletion subroutines. 
Suppose we have a level set $\mathcal{L} = \cup_{i=1}^k L_i$ of $k$ levels of an underlying graph $G(V,E)$ where 
each level $L_i$ is a quadruple $L_i = (V_i, \frac{|V_i|}{|E_i|}, I_i, N(I_i))$ and  $\cup_{i=1}^k I_i$ is a MIS of $G$. 

Suppose we want to add an arbitrary edge $e = (u,v)$ to $G$. Let $G'= G(V,E \cup \{e\})$. 
The amount of recomputation that the insertion of an edge $e=(u,v)$ imposes while reconstructing a MIS of $G'$ given the current MIS of $G$ 
depends on where this edge is being inserted. 
We consider two types of insertions, \emph{heavy} insertions and \emph{light} insertions. 
Roughly speaking, an insertion is a heavy insertion if it changes the current maximal independent set; otherwise it is a light insertion. 
We show that heavy insertions are rare and the majority of insertions 
are in fact light insertions for which we do not need to do significant (re)-computation. 
So, we can use the budget that light insertions provides to us for heavy insertions. 

Intuitively, we have the following observation. 
At an epoch $i$, the independent set $I_i$ has $\Theta(\frac{n_i^2}{m_i})$ vertices, 
the cut $(I_i, N(I_i))$ consists of $\Theta(n_i)$ edges and $G_i$ contains $m_i$ edges. 
So, for any change in the independent set $I_i$, the adversary needs to update $\Theta(\frac{m_i^2}{n_i^2})$ edges of the graph $G_i(V_i,E_i)$. 
As an example, if $G_i$ has $n_i$ vertices and $n_i\sqrt{n_i}$ edges, then $|I_i|$ has $\Theta(\sqrt{n_i})$ vertices, 
the adversary needs to update $\Theta(n_i)$ edges in order to change $I_i$.
The same happens for edge deletions.

\subsection{Preliminaries} 
Let $G(V,E)$ be an undirected unweighted graph with $n=|V|$ vertices and $m=|E|$ edges. 
We assume that there is a unique numbering for the vertices in
$V$ so that we can treat $v \in V$ as a unique number $v$ for $1 \le v \le n=|V|$.
We denote an edge in $E$ with two endpoints $u,v\in V$ by
$(u,v)$. The graph $G$ can have at most ${n \choose 2} = n(n-1)/2$ edges.
Thus, each edge can also be thought of as referring to a unique number
between 1 and ${n \choose 2}$. Here $[x]=\{1,2,3,\cdots,x\}$ when $x\in \NATURAL$.

Given a vertex $v \in V$ we let $N_G(v) = \{u\in V: (u,v) \in E\}$ be the neighborhood of $v$. 
We let $d_G(v) = |N_G(v)|$ be the degree of the vertex $v$. When it is clear from the context we often drop $G$ from $d_G(v)$ and $N_G(v)$ 
and simply write them as $d(v)$ and $N(v)$. The average degree of the graph $G$ is $d(G) = \frac{1}{n}\cdot \sum_{v \in V} d_G(v)$. \\

Next we define a maximal independent set. 

\begin{definition}[Maximal Independent Set (MIS)]
\label{def:mis}
Given an undirected Graph $G(V,E)$, an independent set is a subset of nodes $U \subseteq V$, 
such that no two nodes in $U$ are adjacent. An independent set is maximal if no node can be added without
violating independence. 
\end{definition}

There is a simple greedy algorithm that reports a MIS of $G$. 
In particular, we scan the nodes of $G$ in arbitrary order. 
If a node $u$ does not violate independence, we add $u$ to the MIS. If $u$ violates
independence, we discard $u$. \\

\paragraph{Dynamic Model.}
Let $S$ be a stream $S$ of insertions and deletions of edges. 
We define time $t$ to be the $t$-th operation (i.e., insertion or deletion) of stream $S$. 
Let $I_t$ be a maximal independent set of an underlying graph $G_t(V,E_t)$ whose edge set $E_t$ is the set of edges that are inserted up to time $t$ but not deleted. 
The update time of a dynamic algorithm $\mathcal{A}$ is the time that $\mathcal{A}$ needs to compute a MIS $I_t$ 
of graph $G_{t}(V,E_{t})$ given a MIS $I_{t-1}$ of graph $G_{t-1}(V,E_{t-1})$.  
The update time can be worst-case or amortized. \\

%We should mention that the space of a dynamic algorithm is not as critical as the its update time 
%and at any time $t$ we can store the set of edges that are inserted up to time $t$ but not deleted.
%This is in contrast to graph streams where the space complexity of the streaming algorithm matters. 
%Moreover, often in the streaming model we would like to compute the solution (from a sketch) at the end 
%of the stream. In the data streams, often the update time at any time $t$ is considered as the time to update the maintained sketch (not the solution) and 
%the query time is the time that we need to compute a solution given the sketch at the end of the stream. 
%The main reason for this distinction between update time and query time in the streaming community is 
%we assume updates (which are inserts and deletes) are more frequent than queries, so we postpone the query to the end of stream. 
%This is different than the view in dynamic algorithms where we need to have the correct solution after every update and 
%therefore, in dynamic algorithms world, the update time and query time are essentially the same. \\

\paragraph{Query Model.}
We assume the input graph $G(V,E)$ is represented as an adjacency list. 
We could also assume that $G$ is represented as an adjacency matrix, but adjacency matrices are often suitable for dense graphs where $|E| = \Theta(|V|^2)$. 
In dynamic scenarios we may end up with many edge deletions so that the graph become very sparse for which the adjacency matrix representation may not be appropriate. 
The complexity of dynamic algorithms for graph problems is often measured based on number of neighbor queries 
where for every vertex $v \in V$, we query its $i$-th neighbor. 
We assume that a neighbor querie takes constant time. Therefore, querying the full neighborhood of a vertex $v\in V$ takes $O(d_G(V))$ time. 
We let $\mathcal{Q}(\mathcal{A},G)$ be the number of neighbor queries that an algorithm $\mathcal{A}$ makes to compute a function. \\

%\paragraph{Concentration Bounds.}
%\begin{lemma}[Chebysev's Inequality]\cite{as-pm}
%\label{lem:cheb}
%Let $X$ denote a random variable with the expectation $\Ex{X}$ and the finite variance $Var[X]$. For the real number $K>0$ we get
%
%\[
%  \Pr{|X-\Ex{X}|\geq K}\leq \frac{Var[X]}{K^2}.
%\]
%\end{lemma}
%The proof can be found in Page 41 (Theorem 4.1.1) of \cite{as-pm}.
In this paper, we use the following concentration bound. 

\begin{lemma}[Additive Chernoff Bound]\cite{Cher}
\label{lem:cher}
Let $Y_1,\cdots,Y_m$ denote $m$ identically  distributed and independent random variables such that 
$\Ex{Y_i} =p$ for $1\leq i\leq n$ for a fixed $0\leq p\leq 1$. Let $0<t<1,t\geq p$. 
For $Y=\sum_{i=1}^m Y_i$ it holds that 
\[
  \Pr{Y\geq t\cdot m}\leq \left[\left(\frac{p}{t}\right)^t\cdot \left(\frac{1-p}{1-t}\right)^{(1-t)}\right]^m. 
\]
\end{lemma}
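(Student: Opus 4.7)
The plan is to apply the standard exponential moment (Chernoff) method. First I would introduce a parameter $\lambda > 0$ and apply Markov's inequality to the nonnegative random variable $e^{\lambda Y}$, giving
\[
\Pr{Y \ge tm} \;=\; \Pr{e^{\lambda Y} \ge e^{\lambda t m}} \;\le\; e^{-\lambda t m}\cdot \Ex{e^{\lambda Y}}.
\]
By independence of the $Y_i$, the moment generating function factors: $\Ex{e^{\lambda Y}} = \prod_{i=1}^m \Ex{e^{\lambda Y_i}}$. Since each $Y_i$ is (implicitly) a Bernoulli random variable with mean $p$, one has $\Ex{e^{\lambda Y_i}} = 1 - p + p e^{\lambda}$, so the whole upper bound becomes $\left((1 - p + p e^{\lambda})/e^{\lambda t}\right)^m$.

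Next I would optimize the bound over $\lambda > 0$. Differentiating $\log(1 - p + p e^{\lambda}) - \lambda t$ in $\lambda$ and setting the derivative to zero yields the minimizer $e^{\lambda^\ast} = \frac{t(1-p)}{p(1-t)}$, which is a valid positive choice precisely because $t \ge p$ (and if $t = p$ the claimed bound is just $1$, so there is nothing to prove). Substituting $\lambda^\ast$ back in, the numerator simplifies as
\[
1 - p + p e^{\lambda^\ast} \;=\; 1 - p + \tfrac{t(1-p)}{1-t} \;=\; \tfrac{1-p}{1-t},
\]
and the denominator becomes $e^{\lambda^\ast t} = \left(\tfrac{t(1-p)}{p(1-t)}\right)^{t}$. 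Dividing and collecting the exponents yields exactly
\[
\left(\tfrac{p}{t}\right)^t \cdot \left(\tfrac{1-p}{1-t}\right)^{1-t},
\]
and raising to the $m$-th power gives the stated inequality.

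The argument is entirely routine; there is no real obstacle. The one spot requiring care is the algebraic simplification after substituting $\lambda^\ast$ — keeping track of the factors of $(1-p)$ and $(1-t)$ so that the numerator collapses cleanly. The assumption $t \ge p$ is used exactly to ensure $\lambda^\ast \ge 0$, and the Bernoulli form of the MGF is what makes this particular (KL-divergence) form of the Chernoff bound tight. If one wished to state the lemma for general $Y_i \in [0,1]$, one could first apply the standard convexity argument to replace each $Y_i$ by a Bernoulli with the same mean, but the statement as written only requires the Bernoulli case.
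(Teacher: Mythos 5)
The paper does not prove this lemma; it cites it directly to Chernoff's 1952 paper and treats it as a black box. Your exponential-moment argument is correct, standard, and precisely the one that appears in the cited source, so there is nothing to compare at the level of the paper itself.

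One small but worthwhile observation you already touched on: the lemma as stated only assumes $\Ex{Y_i}=p$ and i.i.d., not that the $Y_i$ are Bernoulli, and without an added boundedness hypothesis (e.g.\ $Y_i\in[0,1]$) the inequality is false. Your derivation silently restricts to the Bernoulli case, and you correctly note at the end that for $Y_i\in[0,1]$ one replaces each $Y_i$ by a Bernoulli of the same mean via convexity of $y\mapsto e^{\lambda y}$. That caveat is real and should be stated as a hypothesis rather than left implicit, but it is a defect of the lemma's wording, not of your proof. In the paper's only application of this lemma (in the proof of Theorem~\ref{thm:mis:dynamic}) the $Y_r$ are indicator variables, so the Bernoulli form is exactly what is used.
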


\section{Maximal Independent Set (MIS)}
\label{sec:mis}

The pseudocode of our offline MIS algorithm is given in Algorithm (1) {\sf Maximal-Independent-Set}. 

%-------------------------------------------------------------------------------------------------------------------------------------------------------------------------------------------

\begin{algorithm*}
\label{alg:offline:mis}
\noindent
\textbf{Input:} Unweighted undirected graph $G(V,E)$ with $n=|V|$ vertices and $m=|E|$ edges.

\begin{algorithmic}[1]
	\STATE Let $i=0$ and $G_{i}(V_{i},E_{i}) = G(V,E)$. Let $c = 34$.
	\WHILE{$V_i \neq \emptyset$}
		\STATE Let $j = 0$, $I_i = N(I_i)= \emptyset$, $n_i = |V_i|$, $m_i=|E_i|$, and $t = \frac{n_i^2}{c\cdot m_i}$. 
		\WHILE{$j \le \max( t ,1)$}
			 \WHILE{ TRUE }
				 \STATE Sample a vertex  $v \in V_r$ uniformly at random.  
				 \IF{$I_r \cup \{v\}$ is an independent set in the induced graph of $V_r$}
				 	\STATE Break the true while loop.
				 \ENDIF
			 \ENDWHILE
			% \STATE Let $X = N(v) \backslash N(I_r)$ be the exclusive neighbors of $v$ in $V_r$ (that are not neighbors of any vertex of $I_r$). 
		
	%		\STATE Let $(v,X)$ = {\sf Grow-Independent-Set$(\mathcal{L},i)$}. 
			\STATE Let $I_i = I_i \cup \{v\}$, $N(I_r) = N(I_r) \cup N_{G_R}(v)$ and $j = j+1$. 
		\ENDWHILE
		\STATE Let level $L_i$ be the quadruple $(V_i, \frac{n_i}{m_i}, I_i, N(I_i))$.
		\STATE Let $G_{i+1}(V_{i+1},E_{i+1})$ be the indued subgraph on $V_{i+1} = V_i \backslash (I_i \cup N(I_i))$.
		\STATE Let $i = i + 1$. 
	\ENDWHILE
\end{algorithmic}

\noindent
\textbf{Output:} Return the level set $ \mathcal{L} = \cup_{i=1}^k L_i$ where $k= |\mathcal{L}| = O(\log n)$ is the number of levels.

\caption{Maximal-Independent-Set}
\end{algorithm*}

%\begin{algorithm*}
%\label{alg:grow:IS}
%\noindent
%\textbf{Input:} The level set $ \mathcal{L} = \cup_{i=1}^k L_i$, and a level $r$. 
%
%\begin{algorithmic}[1]
%	 \WHILE{ TRUE }
%		 \STATE Sample a vertex  $v \in V_r$ uniformly at random.  
%		 \IF{$I_r \cup \{v\}$ is an independent set in the induced graph of $V_r$}
%		 	\STATE Break the while loop.
%		 \ENDIF
%	 \ENDWHILE
%	 \STATE Let $X = N(v) \backslash N(I_r)$ be the exclusive neighbors of $v$ in $V_r$ (that are not neighbors of any vertex of $I_r$). 
%\end{algorithmic}
%
%\noindent
%\textbf{Output:} Return $v$, $X$. 
%
%\caption{Grow-Independent-Set}
%\end{algorithm*}
%

The MIS algorithm Algorithm (1) {\sf Maximal-Independent-Set} is the same as the following MIS algorithm. 
Let $G(V,E)$ be an undirected unweighted graph with $n = |V|$ vertices and $m = |E|$ edges. 
We consider $k$ epochs during which we build levels $L_1, \cdots, L_k$ of independent vertices for $k = O(\log n)$. 
At the beginning of epoch $i$ we assume we have a graph $G_i(V_i,E_i)$. 
For the first epoch, we let $G_1(V_1, E_1) = G(V,E)$. 

At epoch $i$, we sample an ordered set $S_i \subseteq V_i$ of vertices with probability $ \frac{|V_i|}{|E_i|}$  
and we let $I_i$ be a MIS that we find greedily for the induced sub-graph $H(S_i,E[S_i])$. 
If the graph $G_i$ is sparse (i.e., $|E_I| \le |V_i|$), we sample all vertices and the ordered set 
$S_i$ will be a random shuffle of vertices of $V_i$.

We let $N(I_i)$ be the set of neighbors of $I_i$ in the graph $G_i(V_i,E_i)$ and we remove $I_i$ and $N(I_i)$ from $G_i(V_i,E_i)$. 
The level $L_i$ consists of the vertex set $V_i$,  the multiplicative inverse or reciprocal for the average degree of $G_i$ which is 
$\frac{|V_i|}{|E_i|}$, and the independent set $I_i$ and its neighbor set $N(I_i)$.  
We remove $I_i$ and $N(I_i)$ from the graph $G_i$ and recursively start the next epoch $E_{i+1}$. 

The pseudocode of this algorithm is given Algorithm (2) {\sf Maximal-Independent-Set (Subset-Sampling)}. 

\begin{algorithm*}
\label{alg:offline:mis:2nd:try}
\noindent
\textbf{Input:} Unweighted undirected graph $G(V,E)$ with $n=|V|$ vertices and $m=|E|$ edges.

\begin{algorithmic}[1]
	\STATE Let $i=1$ and $G_{i}(V_{i},E_{i}) = G(V,E)$.
	\WHILE{$V_i \neq \emptyset$}
		\STATE Let $S_i$ be a sample set where each vertex $v \in V_i$ is sampled with probability $\Pr{v}=\min(\frac{|V_i|}{|E_i|},1)$. 
		\STATE Let $H(S_i , E[S_i])$ be the induced subgraph of $S_i$, where $E[S_i] = \{(u,v) \in E_i: u \in S_i \text{ and } v \in S_i\}$.
		\STATE Let $I_i$ be the output MIS of the greedy MIS for the graph $H(S_i,E[S_i])$.
		\STATE Let $N(I_i)=\{v\in V_i \backslash I_i: \exists u\in I_i \text{ and }  (u,v) \in H(S_i,E[S_i])\}$ be the neighbor set of $I_i$.
		\STATE Let level $L_i$ be the quadruple $(V_i, \frac{|V_i|}{|E_i|},  I_i, N(I_i))$.
		\STATE Let $G_{i+1}(V_{i+1},E_{i+1})$ be the indued subgraph on $V_{i+1} = V_i \backslash (I_i \cup N(I_i))$.
		\STATE Let $i = i + 1$. 
	\ENDWHILE
\end{algorithmic}

\noindent
\textbf{Output:} Return the levels $\mathcal{L} = \cup_{i =1}^k L_i$ where $k$ is the number of levels.  

\caption{Maximal-Independent-Set (Subset-Sampling)}
\end{algorithm*}

%-------------------------------------------------------------------------------------------------------------------------------------------------------------------------------------------

\subsection{Analysis}

%We prove that at epoch $i$ for the graph $G_i$ we have: 
%\begin{enumerate}
%\item The set $S_i$ has a large independent set $I_i$ whose size is a constant fraction of the size of $S_i$. 
%\item A constant fraction of vertices in $V_i$ are neighbors to such a large independent set $I_i$.  
%\end{enumerate}

First we prove that the induced subgraph $H(S_i , E[S_i])$ is sparse,  that is,  $|E[S_i]| \le c \cdot |S_i|$ for a constant $c \ge 1$. 

\begin{lemma}
\label{lem:bound:sample:size}
Let $G_i(V_i,E_i)$ be an undirected unweighted graph at the beginning of epoch $i$ of Algorithm \ref{alg:offline:mis}. 
Assume that $|E_i| > |V_i|$. 
With probability at least $1/2$, the number of vertices in $S_i$ is $|S_i| \ge \frac{n_i^2}{4m_i} \ge \frac{|E[S_i]|}{16}$ .
\end{lemma}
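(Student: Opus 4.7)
The plan is to bound $|S_i|$ from below and $|E[S_i]|$ from above in terms of their common expectation, and combine with a union bound. Since $|E_i| > |V_i|$, the sampling probability is $p = n_i/m_i \in (0,1)$, so linearity gives
\[
\mu := \Ex{|S_i|} = n_i\, p = \frac{n_i^2}{m_i}, \qquad \Ex{|E[S_i]|} \;=\; m_i\, p^2 \;=\; \frac{n_i^2}{m_i} \;=\; \mu,
\]
where the second identity uses that an edge $(u,v)$ lies in $E[S_i]$ iff both endpoints are independently sampled. Observe that the desired inequality $n_i^2/(4m_i) \ge |E[S_i]|/16$ is equivalent to $|E[S_i]| \le 4\mu$, so it follows immediately from Markov's inequality: $\Pr{|E[S_i]| > 4\mu} \le 1/4$ (no independence needed here).

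For the lower bound $|S_i| \ge \mu/4 = n_i^2/(4m_i)$, I would use that $|S_i|$ is a sum of $n_i$ independent Bernoulli$(p)$ variables and apply a standard lower-tail Chernoff bound with deviation $\delta = 3/4$ to obtain $\Pr{|S_i| < \mu/4} \le \exp(-9\mu/32)$, which is at most $1/4$ once $\mu$ exceeds a small absolute constant (roughly $5$). A union bound over the two failure events then yields both $|S_i| \ge \mu/4$ and $|E[S_i]| \le 4\mu$ simultaneously with probability at least $1/2$, which is exactly the claim.

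The main obstacle is the very dense regime, where $m_i$ is close to $\binom{n_i}{2}$ and $\mu$ can be only slightly above $2$, so the Chernoff exponent above is too small to give a $1/4$ failure probability. In this regime, however, $n_i^2/(4m_i) \le 5/4$, so $|S_i| \ge \mu/4$ reduces to the integer inequality $|S_i| \ge 1$, and I would instead invoke the direct bound $\Pr{|S_i| = 0} = (1-p)^{n_i} \le e^{-p n_i} = e^{-\mu}$. Since $m_i \le \binom{n_i}{2}$ always forces $\mu \ge 2 n_i/(n_i - 1) > 2$, this failure probability is below $e^{-2} < 1/4$, and the same union bound together with the Markov estimate on $|E[S_i]|$ closes the argument uniformly across all densities.
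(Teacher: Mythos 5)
Your overall structure matches the paper's: bound $|S_i|$ from below and $|E[S_i]|$ from above, each with failure probability $\le 1/4$, then union bound. But you take a genuinely different (and more correct) route on the sample-size lower bound, and you should be aware that the paper's own step there is broken. The paper writes ``Using Markov Inequality, $\Pr{X \ge \frac{1}{4}\cdot \frac{n_i^2}{m_i}} \le 1/4$'' for $X = |S_i|$. This is wrong on two counts: what is needed is a \emph{lower}-tail bound $\Pr{X < \frac{1}{4}\mu}$, and Markov's inequality does not give lower-tail bounds (a reverse-Markov bound using $X \le n_i$ is far too weak here unless $m_i = O(n_i)$). Your substitution of a Chernoff lower-tail estimate, falling back to the exact $\Pr{|S_i|=0} = (1-p)^{n_i}$ in the dense regime, is the right fix and is the genuine content that the paper's proof lacks. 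Your Markov step for $|E[S_i]|$ coincides with the paper's, and your remark that no independence among the edge indicators is needed for Markov is apt (the paper's phrasing suggests it relies on independence of the $Y_j$, which is false).

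One small but real gap: you assert that in the dense regime ($\mu \lesssim 5$) the target inequality ``reduces to the integer inequality $|S_i| \ge 1$.'' That reduction is valid only when $\mu/4 \le 1$, i.e.\ $\mu \le 4$. In the window $\mu \in (4, 5)$ you actually need $|S_i| \ge 2$, and neither your stated Chernoff exponent $\exp(-9\mu/32)$ (which is $>1/4$ there) nor the bound on $\Pr{|S_i|=0}$ suffices. The fix is easy --- bound $\Pr{|S_i| \le 1} = (1-p)^{n_i}\bigl(1 + n_i p/(1-p)\bigr) \le e^{-\mu}\bigl(1 + \mu/(1-p)\bigr)$ and check that this is $< 1/4$ for $\mu > 4$ (with a separate look at $p$ close to $1$, which forces $n_i$ small and can be handled directly) --- but as written the case split has a hole. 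With that patched, your proof is sound, and strictly more careful than the one in the paper.
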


\begin{proof}
Let $n_i = |V_i|$ be the number of vertices in $V_i$ and $m_i = |E_i|$ be the number of edges in $E_i$. 
Suppose the vertices in $V_i$ are $v_1,\cdots, v_{n_i}$. 
Corresponding to the vertex $v_j$ we define an indicator random variable $X_j$  for the event that $v_j$ is sampled. 
We define a random variable $X = \sum_{j \in [n_i]} X_j$.  
Since $\Ex{X_j} = \Pr{X_j} = \frac{|V_i|}{|E_i|} = \frac{n_i}{m_i}$, we have $\Ex{X} = \frac{n_i^2}{m_i}$. 
Using Markov Inequality, $\Pr{X \ge \frac{1}{4}\cdot \frac{n_i^2}{m_i}} \le 1/4$. 

Next suppose the edges in $E_i$ are $e_1,\cdots, v_{m_i}$. 
Corresponding to the edge $e_j = (u_j,v_j)$ we define an indicator random variable $Y_j$  for the event that $E_j$ is in $E[S_i]$. 
We define a random variable $Y = \sum_{j \in [m_i]} Y_j$.  
Since $\Ex{Y_j} = \Pr{Y_j} = \Pr{u_j,v_j \in S_i} = \Pr{u_j\in S_i} \cdot \Pr{v_j\in S_i}  = (\frac{|V_i|}{|E_i|})^2 = (\frac{n_i}{m_i})^2$. 
We then have $\Ex{Y} = \frac{n_i^2}{m_i}$. 
Using Markov Inequality, $\Pr{X \ge 4\cdot \frac{n_i^2}{m_i}} \le 1/4$. 

Thus, using the union bound, with probability at least $1/2$, $|S_i| \ge \frac{n_i^2}{4m_i} \ge \frac{|E[S_i]|}{16}$. 

%************
%
%Then, $\Ex{I^R} = \Ex{|R \cap P|} = \sum_{j \in [m]} \Pr{I^R_j = 1} \le (1-\epsilon)\alpha m$. 
%
%Observe that the indicator random variables are independent. 
%Thus we can apply Chernoff bound to prove that the random variable $I^R$ is sharply concentrated around its expectation. 
%That is, for $m \ge \frac{3}{(1-\epsilon)\alpha \epsilon^{2}}\ln(\frac{2|R|}{\delta})$ we have 
%$$\Pr{|I^R - \Ex{I^R}\| \ge \epsilon \cdot \Ex{I^R}} \le 2 e^{-\Ex{I^R}\epsilon^2/3} \le 2 e^{-(1-\epsilon)\alpha m\epsilon^{2}/3} \le \frac{\delta}{|\mathcal{R}|} \enspace .$$  
%
%We define an indicator random variable $I$ for the event that there exists 
%a random variable $I^R$ for $R \in \mathcal{R}$ that is not sharply concentrated around its expectation. 
%Then, using a union bound we have 
%$$\Pr{I = 1} = \Pr{\exists R \in [|\mathcal{R}|]: |I^R - \Ex{I^R}\| \ge \epsilon \Ex{I^R}} \le |\mathcal{R}| \cdot \frac{\delta}{|\mathcal{R}|} = \delta \enspace. $$
%
\end{proof}

Now we prove that the independent set $I_i$ reported at Epoch $i$ of Algorithm \ref{alg:offline:mis} is relatively big with respect 
to the sampled set size $S_i$ and also a constant fraction of vertices in $V_i$ are neighbors of $I_i$ that can be removed 
once we recurse the sampling process for the graph $G_{i+1}$. 

\begin{lemma}
\label{lem:big:independent:Set}
Let $H(S_i , E[S_i])$ be the induced subgraph  reported at Epoch $i$ of Algorithm \ref{alg:offline:mis}.
Then, the random greedy algorithm for the maximal independent set problem returns an independent set $I_i$ 
of size $|I_i| \ge \frac{|S_i|}{34}$. 
\end{lemma}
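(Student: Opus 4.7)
The plan is to combine the sparsity guarantee from Lemma~\ref{lem:bound:sample:size} with the Caro--Wei bound for the random greedy MIS algorithm.

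First, I would invoke Lemma~\ref{lem:bound:sample:size} to assert that (with the probability $\ge 1/2$ guaranteed there) $|E[S_i]| \le 16|S_i|$, so the average degree of the induced subgraph $H(S_i, E[S_i])$ satisfies $\bar d := 2|E[S_i]|/|S_i| \le 32$.

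Second, I would analyze the random greedy MIS algorithm on $H$ by viewing it as processing the vertices of $S_i$ in a uniformly random order and inserting $v$ into $I_i$ whenever $v$ has no neighbor already in $I_i$ (this matches the per-round uniform sampling in Algorithm~(1)). A \emph{sufficient} condition for $v \in I_i$ is that $v$ appears before all of its $d_H(v)$ neighbors in the random ordering, which occurs with probability exactly $1/(d_H(v)+1)$. Summing over the vertices of $S_i$ yields the Caro--Wei style bound
$$\Ex{|I_i|} \;\ge\; \sum_{v \in S_i} \frac{1}{d_H(v)+1}.$$
Third, by convexity of $x \mapsto 1/(x+1)$ (Jensen's inequality applied to the degree distribution of $H$),
$$\sum_{v \in S_i} \frac{1}{d_H(v)+1} \;\ge\; \frac{|S_i|}{\bar d + 1} \;\ge\; \frac{|S_i|}{33} \;\ge\; \frac{|S_i|}{34}.$$

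The main obstacle I anticipate is translating this expectation bound into the pointwise-looking inequality stated in the lemma. The natural resolution is either to interpret the conclusion as a lower bound on $\Ex{|I_i|}$ (which is all the amortized analysis later needs) or to upgrade to a high-probability statement via concentration --- for example, Azuma--Hoeffding applied to a $1$-Lipschitz function of the random permutation, since relocating a single vertex in the order changes $|I_i|$ by at most a constant. The slack between $33$ and $34$ in the Jensen step leaves room to absorb the lower-order losses from any such concentration argument.
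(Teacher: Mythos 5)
Your proof follows essentially the same route as the paper: invoke Lemma~\ref{lem:bound:sample:size} to get sparsity of $H(S_i,E[S_i])$, then apply a Caro--Wei / Tur\'an-type bound. The numerical bookkeeping differs cosmetically --- the paper works with $|E[S_i]|/|S_i|\le 16$ and the expression $|S_i|/(2(d+1))=|S_i|/34$, while you use the standard average degree $\bar d = 2|E[S_i]|/|S_i|\le 32$ and Jensen to get $|S_i|/33\ge |S_i|/34$ --- but these are equivalent up to the constant and both rest on the same idea. Where you go beyond the paper is in supplying the actual justification (the ``$v$ appears before all its neighbors'' argument giving $\Ex{|I_i|}\ge\sum_v 1/(d_H(v)+1)$, followed by convexity), and, more importantly, in flagging a genuine gap: the Caro--Wei bound controls $\Ex{|I_i|}$, not $|I_i|$ itself, yet the lemma is phrased as a deterministic lower bound on the output size. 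That phrasing is false as written --- a random greedy on a star whose center is drawn first returns an independent set of size $1$, far below any $n/(\bar d+1)$ bound --- and the paper's proof simply asserts ``$|I_i|\ge |S_i|/2(d(G_i)+1)$'' without acknowledging this. Your two proposed repairs (read the conclusion as a bound on $\Ex{|I_i|}$, which is all the later amortized analysis uses; or upgrade to a high-probability statement via Azuma--Hoeffding on the $1$-Lipschitz permutation functional, absorbing the loss in the slack between $33$ and $34$) are both sound and are exactly the kind of clarification the paper should have included.
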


\begin{proof} 
First we find the lower bound on the size of the independent set $I_i$. 
Using Lemma \ref{lem:bound:sample:size} we have $|S_i| \ge \frac{n_i^2}{4m_i} \ge \frac{|E[S_i]|}{16}$.  
Therefore, the average degree of $H(S_i , E[S_i])$ is upper bounded by $d(G_i) = \frac{|E[S_i]|}{|S_i|} \le 16$ 
which means that the independent set $I_i$ is of size $|I_i| \ge \frac{|S_i|}{2(d(G_i)+1)} = \frac{|S_i|}{34}$. 
\end{proof}

%\begin{lemma}
%\label{lem:independent:Set:cover}
%Let $G_i(V_i,E_i)$ be an undirected unweighted graph at the beginning of epoch $i$ of Algorithm \ref{alg:offline:mis}. 
%Assume that $|E_i| > |V_i|$. 
%Then, for the independent set $I_i$ in the induced subgraph $H(S_i , E[S_i])$ reported at Epoch $i$ with probability at least $1/2$ we have 
%$|N(I_i)| \ge \frac{|V_i|}{34}$.
%\end{lemma}
%
%\begin{proof}
%Let us consider the sampled set $S_i$. 
%First we find a lower bound on the number of edges in $G_i$ that are incident to $S_i$. 
%Suppose the vertices in $S_i$ are $u_1,\cdots, u_{s}$ where $s = |S_i|$. 
%Corresponding to the vertex $u_j$ we define a random variable $X_j$  for the degree of $u_j$. 
%We define a random variable $X = \sum_{j \in [s]} X_j$.  
%
%Observe that $\Ex{X_j} = \frac{m_i}{n_i}$. Therefore, 
%condition on the event $|S_i| \ge \frac{n_i^2}{4m_i} \ge \frac{|E[S_i]|}{16}$ which happens with probability at least $1/2$, 
%we have $\Ex{X} = s \cdot \frac{m_i}{n_i} \ge \frac{n_i^2}{4m_i} \cdot \frac{m_i}{n_i} = \frac{n_i}{4}$. 
%We then apply Markov Inequality to obtain  $\Pr{\sum_{u_j \in S_i} d_{G_i} (u_j) \le \frac{n_i}{16}} \le 1/4$. 
%Therefore, with probability at least $3/4$, $\sum_{u_j \in S_i} d_{G_i} (u_j) \ge \frac{n_i}{16}$
%\end{proof}
%

\begin{lemma}
\label{lem:neighbor:IS:large}
Let $I_i$ be the independent set in the graph $G_i(V_i,H_i)$ that is reported by Algorithm \ref{alg:offline:mis}.  
Let $N(I_i) = \{  v \in V_i | \exists u \in I_i :  (u,v) \in E_i\}$ be the set of vertices in $G_i$ that are neighbors of $I_i$. 
Then, we have $\Pr{|N(I_i)| \ge \frac{n_i}{900}} \ge 2/3$. 
\end{lemma}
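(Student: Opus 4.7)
The plan is to lower-bound the expectation $\Ex{|N(I_i)|}$ by $\Omega(n_i)$ via a degree-based case analysis and then invoke a concentration inequality to obtain the stated high-probability statement.

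First, I would condition on the event from Lemma \ref{lem:bound:sample:size}, namely that $|S_i| \ge n_i^2/(4m_i)$ and $|E[S_i]| \le 4 n_i^2/m_i$, which holds with probability at least $1/2$. Under this event the induced subgraph $H(S_i, E[S_i])$ has average degree at most $16$, and by Lemma \ref{lem:big:independent:Set} the greedy MIS $I_i$ satisfies $|I_i| \ge |S_i|/34$. Together with the standard Caro--Wei-type bound $\Pr{u \in I_i \mid u \in S_i} \ge 1/(d_H(u)+1)$ (argued via the smallest-priority vertex in the closed $H$-neighborhood in the random order underlying greedy), this gives us the machinery to reason per vertex.

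For each vertex $v \in V_i$ I would split into cases based on $d_{G_i}(v)$. For a low-degree $v$ (say $d_{G_i}(v) \le C \cdot m_i/n_i$ for a fixed constant $C$), the expected $H$-degree $d_H(v) = d_{G_i}(v)\cdot n_i/m_i$ is a constant, so by Jensen's inequality applied to the convex function $x \mapsto 1/(x+1)$, $v$ itself is added to $I_i$ with probability $\Omega(n_i/m_i)$. For a high-degree $v$, the expected number of $v$'s $G_i$-neighbors that are sampled into $S_i$ is large, and at least one such sampled neighbor (a relatively low-$d_{G_i}$ vertex) should survive the greedy process in $H$ with constant probability, placing $v$ in $N(I_i)$. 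Summing via linearity of expectation yields $\Ex{|I_i \cup N(I_i)|} \ge c \cdot n_i$, and since $|I_i| \le |S_i|$ has expectation $n_i^2/m_i$ (negligible compared to $n_i$ in the regime of interest), we obtain $\Ex{|N(I_i)|} \ge c' n_i$.

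To boost the expectation to the required high-probability bound, I would apply Chebyshev's inequality, bounding $\mathrm{Var}(|N(I_i)|) = \sum_{v,w} \mathrm{Cov}(\mathbf{1}_{v \in N(I_i)}, \mathbf{1}_{w \in N(I_i)})$ by the observation that two such indicators are only correlated when $v$ and $w$ share a $G_i$-neighbor; the count of such pairs is at most $\sum_u d_{G_i}(u)^2$, which is controlled by the epoch's degree bounds together with Cauchy--Schwarz. The main obstacle will be the layered randomness that couples the subset-sampling defining $S_i$ with the random order underlying greedy MIS on $H$: the event ``$u \in I_i$'' depends recursively on greedy decisions for $u$'s sampled neighbors, so the Caro--Wei lower bound is the cleanest handle but can be loose when $G_i$ has a highly skewed degree distribution (a few ``hub'' vertices of near-linear degree among many low-degree vertices). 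Ensuring that both the low-degree contribution (to $|I_i|$) and the high-degree contribution (to $|N(I_i)|$) simultaneously sum to $\Omega(n_i)$ in every such regime is where the delicate bookkeeping will be concentrated.
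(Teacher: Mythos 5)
Your per-vertex case split has a gap that is distinct from, and more immediate than, the variance and skewed-degree concerns you raise at the end. Your two cases establish (a) for low-degree $v$ with $d_{G_i}(v)\le C\cdot m_i/n_i$, $\Pr{v\in I_i}=\Omega(n_i/m_i)$, and (b) for high-degree $v$, $\Pr{v\in N(I_i)}=\Omega(1)$. Neither case gives constant coverage probability for a vertex whose degree is near the average $m_i/n_i$, which is the generic regime. Concretely, let $G_i$ be $d$-regular with $d=m_i/n_i>1$, e.g.\ $m_i=n_i^{3/2}$ so $d=\sqrt{n_i}$. If $C>1$ every vertex is ``low-degree'' and your sum gives only $\Ex{|I_i\cup N(I_i)|}\ge n_i\cdot\Omega(n_i/m_i)=\Omega(n_i^2/m_i)=\Omega(\sqrt{n_i})$, far below the $\Omega(n_i)$ you need. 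If $C<1$ every vertex is ``high-degree'', but the expected number of sampled $G_i$-neighbors is $d\cdot n_i/m_i=1$, which is not ``large'', and ``at least one sampled neighbor lands in $I_i$ with constant probability'' does not follow from a first-moment calculation alone.

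The missing piece is an argument that a moderate-degree vertex lands in $N(I_i)$ with constant probability: the expected number of $v$'s $G_i$-neighbors that end up in $I_i$ is $\Theta(d_{G_i}(v)\cdot n_i/m_i)$, a constant, and one must then show this count is positive with constant probability --- a second-moment or Paley--Zygmund step rather than Markov plus linearity. The paper sidesteps the per-vertex viewpoint entirely and works in the sequential sampling picture of Algorithm~1: it tracks how much of $V_i$ the cumulative neighborhood $N(I_i^{j-1})$ covers as vertices are drawn, lower-bounds the total degree $\sum_j d_{G_i}(u_j)\ge n_i/136$ of the sampled independent set via Markov on the aggregate, and argues that while $|N(I_i^{j-1})|<n_i/2$ at least half of each new vertex's degree falls on fresh vertices, so the aggregate degree translates directly into new coverage. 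That global accounting avoids any per-vertex anti-concentration. Your Chebyshev boost is also not what the paper does, but the more fundamental issue is that the first-moment bound $\Ex{|N(I_i)|}=\Omega(n_i)$ does not follow from the case analysis as written.
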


\begin{proof}
Let us consider the independent set $I_i = \{u_1, \cdots, u_{t}\}$ in the graph $G_i(V_i,E_i)$ where $t = \frac{|V_i|^2}{34 \cdot |E_i|}$. 
Suppose when we sample the vertex $u_j$, the set $N(I_i^{j-1})$ is the set of vertices of $G_i$ that are neighbor 
to one of the vertices $u_1,\cdots, u_{j-1}$. That is, $N(I_i^{j-1}) = \{ v \in V_i | \exists 1 \le \ell \le j-1: (v,u_{\ell}) \in E_i\} $. 
Assume that $|N(I_i^{j-1})| < |V_i| / 2$; otherwise, removing the pair set $(I_i, N(I_i))$ from the graph $G_i$ drops 
the number of vertices by half and we can recurse with the induced subgraph of the remaining vertex set. 

Now suppose we sample the vertex $u_j$. 
We define a random variable $X_j$ for the number of vertices in $V_i \backslash N(I_i^{j-1})$ that are neighbors of $u_j$. 
In expectation we have $\Ex{X_j} = d_{G_i}(u_j) \cdot \frac{|V_i \backslash N(I_i^{j-1})|}{|V_i|}$ where $d_{G_i}(u_j)$ is the degree of $u_j$ in $G_i$. 
Let us define a random variable $X = \sum_{j =1}^{t} X_j$. 
We then have 
\[
	\Ex{X} = \sum_{j =1}^{t} \Ex{X_j} = \sum_{j =1}^{t} d_{G_i}(u_j) \cdot \frac{|V_i \backslash N(I_i^{j-1})|}{|V_i|} \ge \frac{1}{2} \cdot  \sum_{j =1}^{t} d_{G_i}(u_j) \enspace .
\]	

Now corresponding to the vertex $u_j$ we define a random variable $Y_j$  for the degree of $u_j$. 
We also define a random variable $Y = \sum_{j \in [t]} Y_j$.  Observe that $\Ex{Y_j} = \frac{m_i}{n_i}$. 
Therefore, we have 
$$\Ex{Y} = t \cdot \frac{m_i}{n_i} \ge  \frac{n_i^2}{34 \cdot m_i} \cdot \frac{m_i}{n_i} = \frac{n_i}{34} \enspace . $$

We then apply Markov Inequality to obtain  
$$
	\Pr{ Y \le \frac{n_i}{136}} = \Pr{\sum_{u_j \in I_i} d_{G_i} (u_j) \le \frac{n_i}{136}} \le 1/4 \enspace .
$$ 

Therefore, with probability at least $3/4$, $\sum_{u_j \in I_i} d_{G_i} (u_j) \ge \frac{n_i}{136}$. 

This essentially yields $\Ex{X}  \ge \frac{1}{2} \cdot  \sum_{j =1}^{t} d_{G_i}(u_j) \ge \frac{n_i}{272}$ 
and we apply the Markov inequality to prove that $\Pr{|N(I_i)| \ge \frac{n_i}{900}} \ge 2/3$.  
\end{proof}

We can increase the success probability of Algorithm (1) {\sf Maximal-Independent-Set} to $1-\delta/n^3$ 
by creating $x = 3\log(n/\delta)$ runs $R_1,\cdots,R_x$ of this algorithm in parallel and report the MIS of 
the run $R_i$ whose neighborhood  size is at least $\frac{n_i}{900}$. 
%Finally observe that at each level $L_i$ the size of the graph drops by at least $\frac{n_i}{900}$. 
%This essentially means after $k = O(\log n)$ rounds, we 

Next we prove that at the end of a level $L_i$, for each vertex $v \in V_i$, either $v$ is deleted from 
the remaining graph $G_{i+1}$ or the degree of $v$ in $G_{i+1}$ is upper-bounded by $O(\log n\cdot \frac{m_i}{n_i})$. 

\begin{lemma}
\label{lem:high:degree:low:prob}
Let $L_i$ be a level in the level set $\mathcal{L}$. 
With probability at least $1-1/n^2$, each vertex $v \in V_i$ is either added to $I_i \cup N(I_i)$ and will not appear in $G_{i+1}$ or the degree of $v$ in the subgraph 
$G_{i+1}(V_{i+1},E_{i+1})$ is $d_{G_{i+1}}(v) \le \frac{3c\log n \cdot m_i}{n_i}$. 
\end{lemma}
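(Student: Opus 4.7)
The plan is to argue by union bound: for each vertex $v \in V_i$, I show that the event ``$v \in V_{i+1}$ and $d_{G_{i+1}}(v) > d^*$,'' where $d^* := \frac{3c\log n \cdot m_i}{n_i}$, occurs with probability at most $1/n^3$. A union bound over the at most $n$ vertices of $V_i$ then yields the claimed failure probability $\le 1/n^2$. Since $d_{G_{i+1}}(v) \le d_{G_i}(v)$, the conclusion is automatic whenever $d_{G_i}(v) \le d^*$, so only vertices with $d_{G_i}(v) > d^*$ require work.

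Fix such a high-degree vertex $v$ and track Algorithm~1 round by round. Let $X_j \subseteq V_i$ denote the set of vertices still ``alive'' (i.e., not yet placed in $I_i \cup N(I_i)$) at the start of the $j$-th accepted sample, and let $r_j := |N_{G_i}(v) \cap X_j|$ count $v$'s currently alive $G_i$-neighbors. Each round removes from $X$ only the newly accepted vertex and its live $G_i$-neighbors, so the sequence $r_1, r_2, \ldots$ is monotone non-increasing. In particular, if the final count $r_t$ exceeds $d^*$, then $r_j > d^*$ for every $j \le t$, and a fortiori $v$ is still alive in $X_j$ for all such $j$.

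Conditional on $v \in X_j$, the accepted sample in round $j$ is uniform on $X_j$ (rejection sampling from $V_i$ until a live vertex is drawn), so $v$ survives round $j$ with probability $1 - (r_j + 1)/|X_j|$. On the event ``$r_j > d^*$ for all $j \le t$,'' this survival probability is at most $1 - d^*/n_i$ in every round (using $|X_j| \le n_i$). Chaining across the $t = n_i^2/(c m_i)$ accepted rounds and using $1-x \le e^{-x}$:
\[
\Pr{v \in V_{i+1} \text{ and } d_{G_{i+1}}(v) > d^*} \;\le\; \left(1 - \frac{d^*}{n_i}\right)^{\!t} \;\le\; \exp\!\left(-\frac{d^* \cdot t}{n_i}\right) \;=\; \frac{1}{n^3},
\]
after substituting $\frac{d^* \cdot t}{n_i} = \frac{3c\log n \cdot m_i}{n_i} \cdot \frac{n_i^2}{c m_i} \cdot \frac{1}{n_i} = 3\log n$. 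The union bound over $v \in V_i$ then completes the proof.

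The only delicate step is the opening reduction: ``$d_{G_{i+1}}(v) > d^*$'' cleanly implies ``$r_j > d^*$ for every $j \le t$,'' which is exactly what makes the per-round survival probability uniformly bounded by $1 - d^*/n_i$. This rests on the fact that vertices are only ever removed from $X$ (never re-added) as the algorithm progresses, so $r_j$ is non-increasing. Once this monotonicity is in hand, everything else is a straightforward product-of-survival calculation using the uniformity of the accepted sample.
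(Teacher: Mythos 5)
Your proof is correct and takes essentially the same approach as the paper: fix a vertex $v$ with $d_{G_i}(v) > d^*$, track the "free/alive" neighborhood of $v$ across the sampling rounds (the paper calls this set $X_v^t = N_{G_i}(v) \cup \{v\} \setminus (I_i^t \cup N(I_i^t))$, which equals your $r_j + 1$), observe it is monotone non-increasing so that a large final degree forces it to have stayed large in every round, bound the per-round survival probability by $1 - d^*/n_i$, chain across $t = n_i^2/(cm_i)$ rounds to get $e^{-3\log n} = 1/n^3$, and union-bound over the $n$ vertices. One small imprecision: the accepted sample is uniform on the set of $w$ with $I_i^{j-1}\cup\{w\}$ independent, i.e.\ on $V_i\setminus N(I_i^{j-1}) = X_j\cup I_i^{j-1}$, not on $X_j$ alone, but since you only use $|X_j|\le n_i$ the resulting bound $1-d^*/n_i$ is unaffected.
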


\begin{proof}
Let us consider a graph $G_i(V_i,E_i)$ at a level $L_i$ where $n_i = |V_i|$ and $m_i = |E_i|$. 
In the beginning of the random sampling process  at level $L_i$, both $I_i$ and $N(I_i)$ are empty sets.  
Our sampling subroutine repeats the following process for $s = \frac{n_i^2}{cm_i}$ times: 
Repeat sampling (with replacement) a vertex $w \in V_i$ uniformly at random as long as $I_i \cup \{w\}$ is not an independent set in $G_i$. 
Once we sample a vertex $w$ for which $I_i \cup \{w\}$ is an independent set, 
we then let $I_i = I_i \cup \{w\}$ and $N(I_i) = N(I_i)  \cup N_{G_i}(w)$. 

Let us consider the process of building the independent set $I_i$ incrementally. 
That is, at each step $t \in [s]$, let $I_i^t = \{w_1,\cdots, w_{t}\}$ be an independent set that 
we found for $G_i$. Let $N(I_i^t)$ be the set of neighbors of $I_i^t$ till step $t$. 
Let $v \in V_i$ be a vertex with the neighbor set $N_{G_i}(v)$ and degree $d_{G_i}(w)$. 
Suppose $d_{G_i}(v) \ge \frac{3c\log n \cdot m_i}{n_i}$ as otherwise nothing left to prove. 

Let $X_v^t = N_{G_i}(v) \cup \{v\} \backslash ( I_i^t \cup N(I_i^t)$ be the set of neighbors of $v$ (including $v$) that are not 
in the independent set $I_i^t$ or adjacent to a vertex in $I_i^t$. 
Observe that if at step $t$ we sample a vertex $w \in X_v^t$, then $I_i^{t-1} \cup \{w\}$ will be an independent set and we can let $v_t = w$. 
If that happens, $v \in I_i \cup N(I_i)$ and the vertex $v$ is eliminated from $G_{i+1}$. So, suppose this does not happen. 
We then define a random event $\mathcal{BAD}_v^t$ for $|X_v^t| \ge \frac{3c\log n \cdot m_i}{n_i}$ but $v_t \notin X_v^t$ at step $t$. 

Observe that $\Pr{\mathcal{BAD}_v^t} \le 1- \frac{\frac{3c\log n \cdot m_i}{n_i}}{n_i} = 1- \frac{3\log n \cdot m_i}{n_i^2}$. 
Then, 
\[
\begin{split}
	&\Pr{ \mathcal{BAD}_v^1 \wedge \cdots \wedge \mathcal{BAD}_v^s} \\
	&= \Pr{\mathcal{BAD}_v^1}\cdot \Pr{\mathcal{BAD}_v^2 | \mathcal{BAD}_v^1} \cdot \Pr{\mathcal{BAD}_v^3 | \mathcal{BAD}_v^1 \wedge \mathcal{BAD}_v^2} \cdots 
	\Pr{\mathcal{BAD}_v^s | \mathcal{BAD}_v^1 \wedge \mathcal{BAD}_v^2 \wedge \cdots \wedge \mathcal{BAD}_v^{s-1}}\\
	&\le (1- \frac{3\log n \cdot m_i}{n_i^2})^s 
	= (1- \frac{3c\log n \cdot m_i}{n_i^2})^{\frac{n_i^2}{cm_i}} 
	\le e^{-3\log n} \le 1/n^3 \enspace .
\end{split}
\]

Using the union bound argument 
with probability at least $1-1/n^2$, each vertex $v \in V_i$ is either added to $I_i \cup N(I_i)$ and will not appear in $G_{i+1}$ or the degree of $v$ in the subgraph 
$G_{i+1}(V_{i+1},E_{i+1})$ is $d_{G_{i+1}}(v) \le \frac{3c\log n \cdot m_i}{n_i}$. 

\end{proof}

%-------------------------------------------------------------------------------------------------------------------------------------------------------------------------------------------
%-------------------------------------------------------------------------------------------------------------------------------------------------------------------------------------------
%-------------------------------------------------------------------------------------------------------------------------------------------------------------------------------------------

\section{Edge Insertion and Deletion}
\label{sec:insert:delete}
Here in this section we describe our edge insertion and deletion subroutines.

%-------------------------------------------------------------------------------------------------------------------------------------------------------------------------------------------

\begin{algorithm*}
\label{alg:insertion}
\noindent
\textbf{Edge-Insertion} (The level set $ \mathcal{L} = \cup_{i=1}^k L_i$ and an edge $e = (u,v)$)
\begin{algorithmic}[1]
	\IF{$u \in I_i$ and $v \in N(I_j)$ for $i,j \in [k]$ and $i < j$} 
		\STATE Invoke {\sf $\mathcal{L}$ = Light-Promotion $(\mathcal{L}, v, i)$}
	\ENDIF
	\IF{$u \in I_i$ and $v \in I_j$ for $i,j \in [k]$  and $i \le j$} 
		\STATE Invoke {\sf $\mathcal{L}$ = Light-Promotion $(\mathcal{L}, v, i)$}
		\STATE Invoke {\sf $\mathcal{L}$ = Heavy-Promotion $(\mathcal{L}, v)$}
	\ENDIF
\end{algorithmic}

\noindent\rule{16.5cm}{0.4pt}\\
\textbf{Edge-Deletion} (The level set $ \mathcal{L} = \cup_{i=1}^k L_i$ and an edge $e = (u,v)$)
\begin{algorithmic}[1]
	\IF{$u \in I_i$ and $v \in N(I_i) \backslash N(I_i \backslash \{u\})$} 
		\STATE Invoke {\sf $\mathcal{L}$ = Demotion$(\mathcal{L}, v, i)$}
	\ENDIF
\end{algorithmic}

\noindent\rule{16.5cm}{0.4pt}\\
\textbf{Light-Promotion} (The level set $ \mathcal{L} = \cup_{i=1}^k L_i$, a vertex $v$ and a level $r < L(v)$)
\begin{algorithmic}[1]
	\STATE Let $j = L(v)$ be the level of the vertex $v$. 
	\FOR{level $r < \ell \le j$}
		\item Let $V_{\ell} = V_{\ell} \backslash \{v\}$ where $V_{\ell}$ is the vertex set in the level $L_{\ell} \in \mathcal{L}$. 
	\ENDFOR
	\STATE $N(I_r) = N(I_r) \cup \{v\}$ and $N(I_j) = N(I_j) \backslash \{v\}$. 
\end{algorithmic}

\noindent\rule{16.5cm}{0.4pt}\\
\textbf{Heavy-Promotion} (The level set $ \mathcal{L} = \cup_{i=1}^k L_i$ and a vertex $v$ with level $j = L(v)$) 
\begin{algorithmic}[1]
	\STATE Let $F = N(I_j)\backslash N(I_j \backslash \{v\})$ be the neighbors of $N(I_j)$ that become free if we remove $v$ from $I_j$. 
	\STATE Let $I_j = I_j \backslash \{v\}$ be the independent set $I_j$ after removal of $v$.  

	\FOR {each vertex $w \in F$}
		\STATE Invoke {\sf $\mathcal{L}$ = Demotion$(\mathcal{L}, w, j)$}
	\ENDFOR

\end{algorithmic}

\noindent\rule{16.5cm}{0.4pt}\\
\textbf{Demotion} (The level set $ \mathcal{L} = \cup_{i=1}^k L_i$ and a vertex $w$ that is in $N(I_j)$ for $j \in [k]$)
\begin{algorithmic}[1]
	\STATE Let $P(w) = N_G(w) \cap \mathcal{I}$ be the set of neighbors of $w$ that are in MIS $\mathcal{I} = \cup_{i =1}^k I_i $. 
	\IF{ $P(w)$ is not empty } 
		\STATE Let $z \in P(w)$ be a vertex with the lowest level $L(z) \le \min_{x \in P(w)} L(x)$. 
		\STATE $N(I_j) = N(I_j) \backslash \{w\}$ and $N(I_{L(z)}) = N(I_{L(z}) \cup \{w\}$. 
	\ELSE
		\FOR{ $r$ in range $(j,k)$}
			\STATE  Sample $w$ with probability $\Pr{w} = \frac{n_r}{cm_r}$. 
			\IF{$w$ is sampled and $I_r = I_r \cup \{w\}$ is an independent set in $G_r$} 
				\STATE Let $I_r = I_r \cup \{w\}$. 
				\FOR{each vertex $z \in N_{G_r}(w)$}
					\STATE Invoke {\sf $\mathcal{L}$ = Light-Promotion $(\mathcal{L}, z, r)$}
				\ENDFOR
				\STATE Break the loop for $r$. 
			\ENDIF
		\ENDFOR
	\ENDIF
\end{algorithmic}

\textbf{Output:} Return the level set $ \mathcal{L} = \cup_{i=1}^k L_i$. 

\caption{Edge Insertion and Deletion Subroutines}
\end{algorithm*}

\subsection{Insertion and Deletion Subroutines}
Let us first consider the insertion of an edge $e=(u,v)$. 
The insertion of $e$ can trigger one of the following cases: 
 
\mbox{}

\InGrayMiddle{\mbox{}\\ 
\small
Let $\mathcal{L} = \cup_{i=1}^k L_i$ be a level set of a graph $G(V,E)$. 
Let $1 \le i \le j \le k$ be two level indices. 
An edge insertion $e = (u,v)$ triggers 
\begin{itemize}
\item \textbf{$(i \leftrightarrow j)$-Light Insertion}  if $u \in N(I_i)$ and either $v \in N(I_i)$ or $v \in V_j$.
\item \textbf{$(i \leftarrow  j)$-Light Promotion} if $u \in I_i$ and $v \in N(I_j)$.
\item \textbf{$(i \leftarrow  j)$-Heavy Promotion} if $u \in I_i$ and $v \in I_j$.
\end{itemize}
\mbox{}\\[-0.35in]}

\mbox{}

First suppose the insertion of an edge $e=(u,v)$ triggers a light insertion. 
That is, there exists $1 \le i \le j \le k$ for which $u \in N(I_i)$ and either $v \in N(I_i)$ or $v \in V_j$.
We then only need to add $e$ to the neighborhood of $u$ and $v$, i.e.,  $N_{G}(u)$ and $N_{G}(v)$, 
and add $e$ to the neighbor set $N(I_i)$. We also need to update the density $\frac{n_r}{m_r}$ for $1 \le r \le j$. 
The density update of each level is done automatically and we move it to the pseudocode of Algorithm (\ref{alg:dynamic}) {\sf Dynamic-MIS} 
for the sake of simplicity of insertion and deletion subroutines. 

Second suppose the insertion of an edge $e=(u,v)$ triggers a light promotion.
That is, there exists $1 \le i \le j \le k$ for which $u \in I_i$ and $v \in N(I_j)$.
We promote $v$ from the neighbor set $N(I_j)$ up to the neighbor set $N(I_i)$. 
We then eliminate $v$ from each vertex set $V_{\ell}$  for $i < \ell \le j$. 
Since $k = O(\log n)$, the light promotion subroutine takes $O(\log n)$ time. 

Finally, we consider the case when the insertion of an edge $e=(u,v)$ triggers a heavy promotion.
That is, there exists $1 \le i \le j \le k$ for which $u \in I_i$ and $v \in I_j$. 
The vertex $v$ is moved from $I_j$ to $N(I_i)$. By this operation, all neighbors of $v$ in $G_j$ that are not incident to 
any other vertex in $I_j\backslash \{v\}$ (that is, $w \in F = N(I_j) \backslash N(I_j \backslash \{v\})$) become free. 
For every such a vertex $w $ we demote $w$. 
That is, if there exists a vertex 
in one of independent sets $I_{r}$ for $r \ge j$ we demote $w$ to the level $L_r$ and add it to $N(I_r)$. 
Otherwise, we check to see if we can add $w$ to an independent set $I_{r}$ for $r \ge j$. 
In particular, for each level $L_r$ for $j \le r \le k$ with probability $\frac{n_r}{cm_r}$ and only if $I_r \cup \{w\}$ 
is an independent set in $G_r$ we add $w$ to $I_r$ and promote vertices in $N_{G_r}(w)$ to the level $L_r$ and add them to $N(I_r)$. 
Since $w$ is not adjacent to any vertex in an independent set $I_r$, the promotion of 
vertices $N_{G_r}(w)$ takes at most $d_{G_r}(w) \le d_{G_j}(w)$ time. 
%Let $F = N(I_j) \backslash N(I_j \backslash \{v\})$ be the set of neighbors of $v$ that become free after $v$ is promoted to the level $i$. 
%For each vertex $w$ in $F$ one of the following two cases will happen:
%
%\begin{itemize}
%\item \textbf{Case 1:} There exists a level $L_{r \ge j}$ such that $w$ will be added to a neighbor set $N(I_r)$ of an independent set $I_r$. 
%\item \textbf{Case 2:} There exists a level $L_{r \ge j}$  for which $w$ will be sampled and added to an independent sets $I_r$. 
%In particular, for every level $r \ge j$, with probability $\frac{n_r}{c m_r}$, we add $w$ to the independent set $I_r$ if the resulted 
%set is still an independent set. We then promote the neighbors of $w$ in $G_r$ to $N(I_r)$ if they are not already in $N(I_r)$.   
%\end{itemize}

As for the deletion of an arbitrary edge $e=(u,v)$, if $u \in I_i$ and $v \in N(I_i) \backslash N(I_i \backslash \{u\})$, 
we demote the vertex $v$. 
That is, if $v$ is adjacent to any independent set $I_{r \ge i}$, we demote $v$ to $N(I_r)$, otherwise we downsample 
$v$ with probability $\frac{n_r}{cm_r}$ for $r \ge j$ and check if we can add it to $I_r$ the same as edge insertion. 

Finally at any time $t$ if there exists a level $L_r$ whose density $\frac{m_r}{n_r}$ is increased or decreased by a factor of at least two, 
we recompute the maximal independent sets of all levels $L_{\ell \ge r}$. 
The density update of each level is done automatically and is moved to the pseudocode of Algorithm (\ref{alg:dynamic}) {\sf Dynamic-MIS}. 

\subsection{Analysis}

Let $\mathcal{L} = \cup_{i=1}^k L_i$ be a level set of an underlying graph $G(V,E)$. 
Recall that given $\mathcal{L}$, the reported maximal independent set is $\mathcal{I} = \cup_{i=1}^k I_i$. 
Let $e = (u,v) \in E$ be an arbitrary edge added to the graph $G$.

We first find an upper-bound for the probability that adding an arbitrary edge triggers a heavy promotion. 

\begin{lemma}
\label{lem:heavy:promotion:prob}
Let $1 \le i \le j \le k$ be two level indices. 
Let $c = 200$. 
The probability that adding an arbitrary edge $e=(u,v)$ triggers an $(i \leftarrow j)$-heavy promotion is at most $\frac{2}{c^2} \cdot \frac{n_i}{m_i} \cdot \frac{n_j}{m_j} $. 
That is, 
$$\Pr{u \in I_i \text{ and } v \in I_j }  \le  \frac{2}{c^2} \cdot \frac{n_i}{m_i} \cdot \frac{n_j}{m_j} = \frac{2}{c^2} \cdot d^{-1}(G_i) \cdot d^{-1}(G_j) \enspace ,$$
where $d^{-1}(G_i) $ and $ d^{-1}(G_j)$ are the multiplicative inverses or reciprocals for the average degree of $G_i$ and $G_j$, respectively. 
\end{lemma}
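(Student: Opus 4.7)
The plan is to bound $\Pr{u \in I_i}$ and $\Pr{v \in I_j \mid u \in I_i}$ separately using the uniform rejection-sampling structure of Algorithm~(1), multiply them for the case $i<j$, and handle the diagonal case $i=j$ via a second-moment argument; the factor of $2$ in the statement will come entirely from the $i=j$ case.

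The first step is a uniform per-level bound: for any fixed vertex $w$ and level $\ell$,
\[
\Pr{w \in I_\ell \mid w \in V_\ell} \;\le\; \frac{n_\ell}{c\, m_\ell}.
\]
This follows because the outer loop of Algorithm~(1) executes at most $s_\ell = n_\ell^2/(c\, m_\ell)$ iterations, so $|I_\ell| \le s_\ell$ deterministically. At each iteration the accepted vertex is drawn uniformly from the current free set $F_t \subseteq V_\ell$, so by the vertex exchangeability of uniform sampling we get $\Pr{w \in I_\ell \mid w \in V_\ell} \le \E[|I_\ell|]/n_\ell \le s_\ell/n_\ell = n_\ell/(c\, m_\ell)$.

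For the case $i<j$, I would condition on the entire history of the algorithm through level $i$, which determines $I_i$ and hence $V_{i+1},\ldots,V_j$. Under this conditioning, the sampling at level $j$ is independent of the event $\{u \in I_i\}$, and the per-level bound applied to $v$ at level $j$ gives $\Pr{v \in I_j \mid u \in I_i, v \in V_j} \le n_j/(c\, m_j)$. Multiplying and using $\Pr{v \in V_j} \le 1$,
\[
\Pr{u \in I_i \text{ and } v \in I_j} \;\le\; \frac{1}{c^2}\cdot\frac{n_i}{m_i}\cdot\frac{n_j}{m_j},
\]
which is within the target. For the case $i=j$ both $u$ and $v$ must lie in the same $I_i$; summing over ordered pairs gives $\sum_{w\neq w'}\mathbf{1}[w,w'\in I_i] = |I_i|(|I_i|-1) \le s_i^2$, and exchangeability over pairs yields $\Pr{u,v\in I_i} \le s_i^2/(n_i(n_i-1)) \le 2 s_i^2/n_i^2 = (2/c^2)(n_i/m_i)^2$, matching the stated bound (the factor $2$ absorbs the $n_i/(n_i-1)$ correction).

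The main obstacle I anticipate is making the symmetry step in the per-level bound rigorous: strictly speaking, $\Pr{w \in I_\ell}$ depends on the local graph structure around $w$ because high-degree vertices are rejected more often, so one cannot literally permute vertex labels. I would handle this by noting that at any iteration $t$ the free set satisfies $|F_t| \ge n_\ell/2$ (invoking Lemma~\ref{lem:neighbor:IS:large} applied to the partial independent set $I_i^t$), so the per-iteration probability that $w$ is the accepted vertex is at most $2/n_\ell$, and a union bound over the $s_\ell$ iterations gives $\Pr{w \in I_\ell} \le 2 s_\ell/n_\ell$. The extra constant is absorbed by the choice $c=200$ in the lemma statement (which is larger than the $c=34$ used internally in Algorithm~(1)), and the same conditional symmetry argument then carries through for $\Pr{v\in I_j\mid u\in I_i}$ and for the pair count at level $i=j$.
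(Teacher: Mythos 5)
Your decomposition is a cleaner and more explicit version of what the paper does: both arguments reduce to the bound $|I_\ell| \le s_\ell = n_\ell^2/(c\,m_\ell)$ together with an exchangeability-style normalization (the paper compresses this into the single ratio $\Ex{|I_i|}\cdot\Ex{|I_j|}/\binom{n_i}{2}$; you split it into a chain-rule bound for $i<j$ and a pair count for $i=j$). Your observation that the factor of $2$ comes entirely from the diagonal case is correct, and you rightly flag the per-level exchangeability step as the weak point---the paper does not address it at all.

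The gap is in the fix you propose. You invoke Lemma~\ref{lem:neighbor:IS:large} to conclude that the free set $F_t$ has size at least $n_\ell/2$ at every iteration $t$, so that the per-iteration acceptance probability of a fixed vertex $w$ is at most $2/n_\ell$ and a union bound over the $s_\ell$ iterations gives $2s_\ell/n_\ell$. But that lemma does not say this: its conclusion is a \emph{lower} bound on $|N(I_\ell)|$ (with probability only $\ge 2/3$), which is the complement of what you need, and its own proof \emph{assumes} $|N(I_\ell^{t-1})| < n_\ell/2$ as a working hypothesis, with an informal ``otherwise recurse early'' escape that the pseudocode of Algorithm~(1) does not implement. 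Nothing in the paper supplies a pointwise lower bound on $|F_t|$ uniformly over $t \le s_\ell$; if the sampled vertices happen to have large degree, $F_t$ can collapse well below $n_\ell/2$ long before step $s_\ell$, at which point your estimate $1/|F_t| \le 2/n_\ell$ fails. Closing this would require a high-probability lower bound on $\min_{t \le s_\ell}|F_t|$ (say via a martingale or Chernoff argument on the shrinkage of the free set), or a route that avoids inverting $|F_t|$ altogether. A secondary point: your claim that the lost factor of $2$ is ``absorbed by the choice $c=200$'' goes the wrong way---enlarging $c$ in the lemma statement \emph{shrinks} the claimed bound and therefore makes it harder, not easier, to establish relative to the $c=34$ used inside Algorithm~(1). (That $c$-mismatch is already present in the paper, but your parenthetical has the direction of the slack backwards.)
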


\begin{proof}
We define a random event $\mathcal{HP}_{i \leftarrow j}$ for $u \in I_i$ and $v \in I_j$.
We sample vertices in the graphs $G_i$ and $G_j$ with probabilities $\frac{n_i}{cm_i}$ and $\frac{n_j}{cm_j}$, respectively. 
Therefore, $\Ex{|I_i|} = \frac{n_i^2}{c m_i}$ and $\Ex{|I_j|} = \frac{n_j^2}{c m_j}$. 
Since $n_j \le n_i$, we then have 
\[
\begin{split}
	\Pr{\mathcal{H}_{i \leftarrow j}} = \frac{\frac{n_i^2}{c m_i} \cdot \frac{n_j^2}{c m_j}}{{n_i \choose 2}} 
	= \frac{2 \cdot \frac{n_i^2}{c m_i} \cdot \frac{n_j^2}{c m_j}}{n_i (n_i -1)} \le \frac{2n_i n_j}{c^2 m_i m_j} 
	= \frac{2}{c^2} \cdot \frac{n_i}{m_i} \cdot \frac{n_j}{m_j} 
	= \frac{2}{c^2} \cdot d^{-1}(G_i) \cdot d^{-1}(G_j) \enspace .
\end{split}
\]

We can also define a random event $\mathcal{HP}$ if there exist two indices $1 \le i \le j \le k$ for which we have $u \in I_i$ and $v \in I_j$. 
\[
\begin{split}
	\Pr{\mathcal{HP}} &= \frac{{|\mathcal{I}| \choose 1}}{{n \choose 2}} = \frac{|\mathcal{I}| \cdot (|\mathcal{I}|-1)}{n (n-1)} 
	\le \sum_{i \in [k]} \sum_{j \in [k]} \Pr{\mathcal{H}_{i \leftarrow j}} 
	\le \sum_{i \in [k]} \sum_{j \in [k]}  2\frac{n_i}{c m_i} \cdot \frac{n_j}{c m_j} \\
	& = \sum_{i \in [k]} \sum_{j \in [k]}   \frac{2}{c^2} \cdot d^{-1}(G_i) \cdot d^{-1}(G_j) \enspace .
\end{split}
\] 

\end{proof}

Next we bound the expected number of queries that our insertion and deletion subroutines need to recompute a maximal independent set 
after a light insertion, a light promotion or a heavy promotion happen. 

\begin{lemma}
\label{lem:light:insertion:cost}
Let $e=(u,v) \in E$ be an arbitrary edge.  
Let $c_{LI}$ be a large enough constant.  
Suppose that $e$ triggers a light insertion which happens if 
there exists $1 \le i \le j \le k$ for which $u \in N(I_i)$ and either $v \in N(I_i)$ or $v \in V_j$.
Then,  $\mathcal{Q}(\text{Light-Insertion}(v),G) = c_{LI} \log n $. 
\end{lemma}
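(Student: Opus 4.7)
The plan is to trace through exactly what Edge-Insertion does when the case is a light insertion and to argue that none of the "expensive" subroutines (Light-Promotion, Heavy-Promotion, Demotion) are invoked. By the definition of a light insertion, there exist indices $1 \le i \le j \le k$ with $u \in N(I_i)$ and $v \in N(I_i) \cup V_j$; in particular, neither endpoint lies in any $I_\ell$, so the guards on lines checking $u \in I_i$ in Edge-Insertion all fail. Consequently no promotion subroutine is triggered, and the MIS $\mathcal{I}=\cup_\ell I_\ell$ is left unchanged.

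Next I would enumerate the constant-cost bookkeeping that the algorithm must still perform. First, we record the new edge in the adjacency lists of $u$ and $v$ (two pointer updates). Second, since $u \in N(I_i)$ the edge must be inserted into the neighbor-set data structure $N(I_i)$ at level $L_i$. Third, identifying the level $i$ (and $j$, if $v \in V_j$) requires consulting the per-vertex level pointers $L(u), L(v)$ which are maintained globally; this is $O(1)$ per endpoint. Finally, the density $n_r/m_r$ of each level $L_r$ for $1 \le r \le j$ must be refreshed so that subsequent sampling uses the correct probability. Since $j \le k = O(\log n)$, this contributes $O(\log n)$ work, which is the only place where the logarithmic factor enters.

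Summing these contributions, the total number of neighbor queries is at most a constant plus $O(k)$, which is bounded above by $c_{LI}\log n$ for a sufficiently large absolute constant $c_{LI}$, proving the claim.

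I expect the main obstacle to be conceptual rather than computational: one must be careful that the density-update step (which the author explicitly factored out into the Dynamic-MIS pseudocode) is either counted here as $O(\log n)$ work or deferred, and one must verify that locating the level of an endpoint truly costs $O(1)$ under the assumed query model. Both are direct given that the algorithm maintains a level-pointer $L(\cdot)$ per vertex and only $k=O(\log n)$ levels exist, so the lemma follows without further probabilistic analysis.
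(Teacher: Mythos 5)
Your proof is correct and takes essentially the same approach as the paper: observe that a light insertion triggers no promotion subroutine, account for the $O(1)$ adjacency-list and neighbor-set updates, and charge the $O(\log n)$ density refreshes over the $k = O(\log n)$ levels. The only difference is that you make explicit (correctly) that the guards $u\in I_i$ in Edge-Insertion fail, whereas the paper leaves that implicit.
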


\begin{proof}
If $e$ triggers a light insertion, we need to add $e$ to the neighborhood of $u$ and $v$, i.e.,  $N_{G}(u)$ and $N_{G}(v)$, 
and add $e$ to the neighbor set $N(I_i)$. We also need to update the density $\frac{n_r}{m_r}$ for $1 \le r \le j$. 
This can be done using three query and update operations plus $O(\log n)$ density updates. 
Thus, $\mathcal{Q}(\text{Light-Insertion}(v),G) = c_{LI}\cdot \log n $ for large enough constant $c_{LI}$. 
\end{proof}

\begin{lemma}
\label{lem:light:promotion:cost}
Let $e=(u,v) \in E$ be an arbitrary edge.  
Let $c_{LP}$ be a large enough constant.  
Suppose that $e$ triggers a light promotion that occurs when 
there exists $1 \le i \le j \le k$ for which $u \in I_i$ and $v \in N(I_j)$.
Then,  $\mathcal{Q}(\text{Light-Promotion}(v),G)  = c_{LP} \cdot \log n$. 
\end{lemma}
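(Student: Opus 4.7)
The plan is to simply read off the cost directly from the pseudocode of the \textbf{Light-Promotion} subroutine and bound each of its steps. A light promotion on the edge $e=(u,v)$ with $u \in I_i$ and $v \in N(I_j)$ does not disturb any independent set or trigger recursive demotions; it is essentially a bookkeeping move of a single vertex $v$ from the neighbor set $N(I_j)$ up to $N(I_i)$.

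First, I would argue that determining $j = L(v)$ takes $O(1)$ time, assuming each vertex stores a pointer to the level it currently belongs to. Next, the only nontrivial step is the \texttt{for} loop that walks the levels $\ell$ with $i < \ell \le j$ and removes $v$ from $V_\ell$; since by Lemma~\ref{lem:high:degree:low:prob} (and the construction of $\mathcal{L}$) the number of levels satisfies $k = O(\log n)$, this loop performs at most $k$ iterations. Each iteration does a single constant-time deletion of $v$ from the vertex set $V_\ell$, which takes $O(1)$ using the standard hash-table/doubly-linked-list representation of each $V_\ell$. Finally, updating $N(I_i) \leftarrow N(I_i) \cup \{v\}$ and $N(I_j) \leftarrow N(I_j) \setminus \{v\}$ costs $O(1)$ each.

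Summing these contributions gives $O(\log n)$ neighbor queries and constant-time pointer updates in total, so for a sufficiently large absolute constant $c_{LP}$ we have $\mathcal{Q}(\text{Light-Promotion}(v), G) \le c_{LP} \cdot \log n$, which is the claimed bound.

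There is no real obstacle in this lemma: the only subtlety worth flagging is that we must not confuse a light promotion with a heavy promotion. In particular, since $v \in N(I_j)$ and not $v \in I_j$, we do not remove any vertex from an independent set, we do not free any neighbors of $v$, and in particular we do not invoke the Demotion subroutine. Thus the cost is driven entirely by the length of the level chain between $i$ and $j$, which is $O(\log n)$.
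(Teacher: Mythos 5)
Your proof is correct and follows essentially the same argument as the paper's (which is terser): promote $v$ from $N(I_j)$ to $N(I_i)$, delete $v$ from each $V_\ell$ for $i < \ell \le j$, and observe that there are $k = O(\log n)$ levels, so the total cost is $O(\log n)$. The extra detail you add about pointer lookups, $O(1)$ set operations, and why no demotion is triggered is sound and makes the argument more explicit, but it is the same route.
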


\begin{proof}
We promote $v$ from $N(I_j)$ up to $N(I_i)$. 
We then eliminate $v$ from each vertex set $V_{\ell}$  for $i < \ell \le j$. 
Since $k = O(\log n)$, the light promotion subroutine takes $\mathcal{Q}(\text{Light-Promotion}(v)),G)  = c_{LP} \cdot \log n$ time for large enough constant $c_{LI}$. 
\end{proof}

\begin{lemma}
\label{lem:heavy:promotion:cost}
Let $e=(u,v) \in E$ be an arbitrary edge.  
Let $c_{HP}$ be a large enough constant.  
Suppose that $e$ triggers a heavy promotion that occurs when 
there exists $1 \le i \le j \le k$ for which $u \in I_i$ and $v \in I_j$.
Then,  
$$
	\Ex{\mathcal{Q}(\text{Heavy-Promotion}(v),G)}  
	\le c_{HP}\log n \cdot \Ex{ \sum_{w \in N_{G_j}(v)} d_{G_j}(w)} 
	\le  c_{HP}\log n \cdot  \frac{3c\log n \cdot m_i}{n_i} \cdot \frac{m_j}{n_j}\enspace .
$$
\end{lemma}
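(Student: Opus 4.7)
The plan is to bound $\mathcal{Q}(\text{Heavy-Promotion}(v),G)$ by decomposing the subroutine's work into its constituent calls. Looking at the pseudocode, Heavy-Promotion first identifies the freed set $F = N(I_j)\setminus N(I_j\setminus\{v\})$, which is trivially contained in $N_{G_j}(v)$ since only neighbors of $v$ in $G_j$ can be freed when $v$ is removed from $I_j$. It then invokes Demotion$(w,j)$ for every $w \in F$. Hence
\[
\mathcal{Q}(\text{Heavy-Promotion}(v),G) \le O(1) + \sum_{w \in F} \mathcal{Q}(\text{Demotion}(w),G) \le O(1) + \sum_{w \in N_{G_j}(v)} \mathcal{Q}(\text{Demotion}(w),G).
\]

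Next I would bound $\mathcal{Q}(\text{Demotion}(w),G)$ for a single vertex $w$. The procedure first tests whether $P(w) = N_G(w)\cap \mathcal{I}$ is nonempty and, if so, finds a minimum-level vertex $z \in P(w)$ and performs a constant number of pointer updates, costing $O(\log n)$ once we include maintenance of level labels. If $P(w)$ is empty the loop runs over the $O(\log n)$ levels $r \ge j$, each iteration requiring $O(1)$ to sample $w$ and test independence; at most one iteration succeeds, in which case Light-Promotion is invoked on every $z \in N_{G_r}(w)$. Since $G_r$ is a vertex-induced subgraph of $G_j$ for $r \ge j$, we have $|N_{G_r}(w)| \le d_{G_j}(w)$, and each Light-Promotion costs $O(\log n)$ by Lemma \ref{lem:light:promotion:cost}. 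Combining the cases yields $\mathcal{Q}(\text{Demotion}(w),G) \le c_{HP}\log n \cdot d_{G_j}(w)$ for a suitably large constant, which when summed over $w \in N_{G_j}(v)$ and put under the expectation gives the first inequality of the lemma.

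For the second inequality I would bound the expectation $\Ex{\sum_{w\in N_{G_j}(v)}d_{G_j}(w)}$ as a product. The factor $d_{G_j}(v)$ (the size of $N_{G_j}(v)$) is controlled by Lemma \ref{lem:high:degree:low:prob}: since $v$ survived removal from the earlier levels $L_1,\ldots,L_{j-1}$, its degree in $G_j$ is at most $\frac{3c\log n\cdot m_i}{n_i}$ (appealing to the degree bound at level $L_i$, which dominates subsequent levels' densities). The remaining factor $\frac{1}{|N_{G_j}(v)|}\sum_{w\in N_{G_j}(v)} d_{G_j}(w)$ is an average neighbor-degree, which in expectation is bounded by the average degree $\frac{2m_j}{n_j}$ of $G_j$, since the subsampling defining $G_j$ is oblivious to the fixed vertex $v$, so the neighbors of $v$ surviving to $G_j$ can be treated as ``typical'' vertices of $G_j$.

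The main obstacle is precisely that last averaging step: showing that $\Ex{\sum_{w\in N_{G_j}(v)}d_{G_j}(w)}$ scales with the \emph{average} degree of $G_j$ rather than the max degree. The naive bound, replacing $d_{G_j}(w)$ by $\max_w d_{G_j}(w)$, would lose an additional $\log n$ factor via Lemma \ref{lem:high:degree:low:prob}. To avoid this one has to argue that, with respect to the randomness of sampling into earlier $I_\ell$'s, the neighborhood $N_{G_j}(v)$ is distributed so that the expected neighbor degree matches the average degree of $G_j$; against an oblivious adversary this is a symmetry argument, but care is needed because the conditioning $v \in I_j$ is correlated with the realization of $G_j$. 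The other steps — decomposing Heavy-Promotion into Demotions, and bounding each Demotion by $O(\log n \cdot d_{G_j}(w))$ — are routine accounting.
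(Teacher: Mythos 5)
Your decomposition of \textsf{Heavy-Promotion} into a sum of \textsf{Demotion} costs, and the per-vertex bound $\mathcal{Q}(\text{Demotion}(w),G)\le c_{HP}\log n\cdot d_{G_j}(w)$, follow the paper's accounting closely and are fine. Where you diverge is in how you factor the quantity $\Ex{\sum_{w\in N_{G_j}(v)}d_{G_j}(w)}$. The paper bounds each \emph{neighbor} degree $d_{G_j}(w)$ pointwise by the high-probability cap $\frac{3c\log n\cdot m_i}{n_i}$ from Lemma~\ref{lem:high:degree:low:prob}, and then uses $\Ex{d_{G_j}(v)}=\frac{m_j}{n_j}$ for the \emph{number} of summands. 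You do the opposite: you cap the number of summands $d_{G_j}(v)$ via Lemma~\ref{lem:high:degree:low:prob} and then try to bound the \emph{average} neighbor degree $\frac{1}{|N_{G_j}(v)|}\sum_{w} d_{G_j}(w)$ by the graph's average degree $\Theta(m_j/n_j)$.

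This swap is not innocuous, and you correctly identify the obstacle: your version requires a ``typical neighbor'' argument to say that the expected average degree of $v$'s neighbors in $G_j$ equals the graph's average degree. That statement is generically false (average neighbor degree is biased upward, the ``friendship paradox''), and the conditioning on $v\in I_j$ further entangles $v$'s neighborhood with the realization of $G_j$. You flag this as ``a symmetry argument with care needed,'' but you do not close it, and it is not a routine gap; closing it would require roughly the same kind of two-sided degree control that the paper instead gets for free by applying Lemma~\ref{lem:high:degree:low:prob} to the $w$'s directly, which sidesteps the neighbor-degree bias entirely. You should also note the paper separately treats the $i=j$ boundary case via Claim~\ref{clm:bound:neighor:size:expectation}, computing $\Ex{\sum_{w\in N_{G_j}(v)}d_{G_j}(w)}=\bigl(\frac{m_i}{n_i}\bigr)^2$ by an explicit double sum over degrees; your proposal does not address this case, and the degree-cap-times-average-degree factoring is not obviously available there since there is no earlier level to supply the cap.
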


%\begin{lemma}
%\label{lem:insertion:cost}
%Let $e=(u,v) \in E$ be an arbitrary edge.  
%Let $c_{LI}, c_{LP}, c_{HP}$ be large enough constants.  
%\begin{itemize}
%\item If $e$ triggers a light insertion, then $\mathcal{Q}(\text{Light-Insertion}(v),G) = c_{LI} \log n $.
%\item If $e$ triggers a light promotion, then $\mathcal{Q}(\text{Light-Promotion}(v),G)  = c_{LP} \cdot \log n$.
%\item If $e$ triggers a heavy promotion, then 
%$$
%	\Ex{\mathcal{Q}(\text{Heavy-Promotion}(v),G)}  
%	\le c_{HP}\log n \cdot \Ex{ \sum_{w \in N_{G_j}(v)} d_{G_j}(w)} 
%	\le  c_{HP}\log n \cdot  \frac{3c\log n \cdot m_i}{n_i} \cdot \frac{m_j}{n_j}\enspace,
%$$
%for $u \in I_i$, $v \in I_j$, and $1\le i \le j \le k$. 
%\end{itemize}
%\end{lemma}
%

\begin{proof}
The vertex $v$ is moved from $I_j$ to $N(I_i)$. By this operation, all neighbors of $v$ in $G_j$ that are not incident to 
any other vertex in $I_j\backslash \{v\}$ (that is, vertices in $F = N(I_j) \backslash N(I_j \backslash \{v\})$) become free. 
For each vertex $w \in F$ one of the following cases can happen. 

\textbf{Case 1:} If there exists a vertex in one of independent sets $I_{r}$ for $r \ge j$, we then demote $w$ to the level $L_r$ and add it to $N(I_r)$. 
To this end, we need to query the neighborhood of $v$ in $G_j$ which takes $O(d_{G_j}(v))$. 
Observe that $\Ex{d_{G_j}(v)} = \frac{m_i}{n_i}$. 
We also need to update the vertex sets $V_{\ell}$ and update the density $\frac{m_{\ell}}{n_{\ell}}$ for $j < \ell \le r$ what needs $O(\log n)$ query updates.\\

\textbf{Case 2:} Otherwise, for each level $L_r$ for $j \le r \le k$ with probability $\frac{n_r}{cm_r}$ and only if $I_r \cup \{w\}$ 
is an independent set in $G_r$ we add $w$ to $I_r$ and promote vertices in $N_{G_r}(w)$ to the level $L_r$ and add them to $N(I_r)$. 
Since $w$ is not adjacent to any vertex in an independent set $I_r$, the promotion of 
vertices $N_{G_r}(w)$ takes at most $d_{G_r}(w) \le d_{G_j}(w) \cdot O(\log n)$ time where we need to update 
the sets $V_{\ell > r}$ and the density $\frac{m_{\ell}}{n_{\ell}}$ as the vertices in $N_{G_r}(w)$ are promoted to the level $r$. 

%Recall that the insertion of an edge $e=(u,v)$ triggers a heavy promotion which means there exists $1 \le i \le j \le k$ for which $u \in I_i$ and $v \in I_j$. 
Let us study the expected value of the random variable $X = \sum_{w \in N_{G_j}(v)} d_{G_j}(w)$. 
We consider two cases either $i<j$ or $i=j$. 

For the case when $i < j$ for every $w \in N_{G_j}(v)$ using Lemma \ref{lem:high:degree:low:prob} with probability at least $1-1/n^2$ 
we have  $d_{G_j}(w) <  \frac{3c\log n \cdot m_i}{n_i}$. 
Since the edge $e=(u,v)$ is chosen arbitrary, we have $\Ex{d_{G_j}(v)} = \frac{m_j}{n_j}$. 
Therefore, 
$$
\Ex{X} = \Ex{ \sum_{w \in N_{G_j}(v)} d_{G_j}(w)} \le  \frac{3c\log n \cdot m_i}{n_i} \cdot \Ex{d_{G_j}(v)} = \frac{3c\log n \cdot m_i}{n_i} \cdot \frac{m_j}{n_j} \enspace ,
$$
what yields 
$$
\mathcal{Q}(\text{Heavy-Promotion}(v)),G)  \le c_{HP}\log n \cdot  \Ex{\sum_{w \in N_{G_j}(v)} d_{G_j}(w)} = c_{HP}\log n \cdot \frac{3c\log n \cdot m_i}{n_i} \cdot \frac{m_j}{n_j} \enspace .
$$

%We define a random event $\mathcal{HP}_{i \leftarrow j}$ for $u \in I_i$ and $v \in I_j$. 
%Suppose the event $\mathcal{HP}_{i \leftarrow j}$ occurs for an arbitrary edge $e=(u,v)$ which 
%happens with probability $\Pr{\mathcal{H}_{i,j}} \le \frac{2}{c^2} \cdot \frac{n_i}{m_i} \cdot \frac{n_j}{m_j}$. 
%Recall that in this case, the vertex $v$ is eliminated from $I_j$ and is added to $N(I_i)$. 
%Let $F = N(I_j) \backslash N(I_j \backslash \{v\})$ be the set of neighbors of $v$ that become free after $v$ is promoted to the level $i$. 

%Observe that using Lemma \ref{lem:high:degree:low:prob} with probability at least $1-1/n^2$, 
%we have $d_{G_j}(v) = |N_{G_j}(v)| \le 3\log n \cdot \frac{cm_j}{n_j}$; otherwise one of the neighbors of $v$ is in $I_j$ and so, $v \in N(I_j)$ 
%which means that $e=(u,v)$ does not trigger a heavy promotion which is a contradiction to our assumption that $e$ triggers a heavy promotion. 

The harder case is when $i=j$, especially when $i=j=1$ where we need to upper-bound the term $\Ex{ \sum_{w \in N_{G_j}(v)} d_{G_j}(w)}$. 
Observe that we can choose either $u$ or $v$.
So, the question boils down to study the expected sum of degrees of neighbors of a random vertex in $G_i$ 
for which we use Claim \ref{clm:bound:neighor:size:expectation} to show that 
$\Ex{\sum_{w \in N_{G_j}(v)} d_{G_j}(w)} = \frac{m_i}{n_i}\cdot \frac{m_i}{n_i}$ what proves this lemma. 

\begin{claim}
\label{clm:bound:neighor:size:expectation}
Let $v$ be a vertex that we sample uniformly at random from an independent set $I_i$ of a level $L_i$ for $i \in [k]$. 
Then, $\Ex{\sum_{w \in N_{G_j}(v)} d_{G_j}(w)} = \frac{m_i}{n_i}\cdot \frac{m_i}{n_i}$. 
\end{claim}

\begin{proof}
Let us define a random variable $X$ corresponding to the value $\sum_{w \in N_{G_j}(v)} d_{G_j}(w)$ of a random vertex $v \in I_i$. 
Then, we have 
%\[
%\begin{split}
%	\Ex{X} &= \Ex{\sum_{w \in N_{G_j}(v)} d_{G_j}(w)} 
%	= \sum_{v \in G_i} \Pr{v} \sum_{w \in G_i} \Pr{v \text{ is incident to } w \text{ in } G_i} \cdot d_{G_i}(w)\\  
%	&= \sum_{v \in G_i} \Pr{v} \cdot \sum_{w \in G_i} \frac{d_{G_i}(w)}{n_i}  \cdot d_{G_i}(w)
%	=  \sum_{v \in G_i}  \frac{1}{n_i} \cdot \sum_{w \in G_i} \frac{d_{G_i}(w)}{n_i}\cdot  d_{G_i}(w) \\
%	&\le \sum_{v \in G_i}  \frac{{d_{G_i}(v)}}{n_i} \cdot \sum_{w \in G_i} \frac{d_{G_i}(w)}{n_i}
%	= \frac{m_i}{n_i}\cdot \frac{m_i}{n_i} \enspace .
%\end{split}
%\]

\[
\begin{split}
	\Ex{X} &= \Ex{\sum_{w \in N_{G_j}(v)} d_{G_j}(w)} 
	= \sum_{v \in G_i} \Pr{v} d_{G_i}(v) \cdot \sum_{w \in G_i} \Pr{v \text{ incident to } w  \text{ in } G_i}\\  
	&= \sum_{v \in G_i} \Pr{v} d_{G_i}(v) \cdot \sum_{w \in G_i} \frac{d_{G_i}(w)}{n_i} 
	=  \sum_{v \in G_i}  \frac{{d_{G_i}(v)}}{n_i} \cdot \sum_{w \in G_i} \frac{d_{G_i}(w)}{n_i} = \frac{m_i}{n_i}\cdot \frac{m_i}{n_i} \\
\end{split}
\]

\end{proof}

\end{proof}

\begin{lemma}
\label{lem:bound:expected:query}
Let $e=(u,v)$ be an arbitrary edge  that is added to a graph $G(V,E)$ whose level set is $\mathcal{L} = \cup_{i=1}^k L_i$.
%Let $u \in I_i$ and $v \in I_j$ for $1 \le i \le j \le k$. 
The expected number of queries that Algorithm {\sf Edge-Insertion} makes to update the level set $\mathcal{L}$ is 
$\Ex{\mathcal{Q}(\text{Edge-Insertion}(e=(u,v))),G)} = c\cdot \log^2 n$ where $c$ is a large enough constant. 
\end{lemma}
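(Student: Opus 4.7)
The plan is to decompose {\sf Edge-Insertion} into the three cases identified at the start of Section~\ref{sec:insert:delete}: an $(i \leftrightarrow j)$-light insertion, an $(i \leftarrow j)$-light promotion, and an $(i \leftarrow j)$-heavy promotion. Since the outer conditionals in {\sf Edge-Insertion} involve only a constant number of membership and level lookups, the classification step itself is $O(\log n)$, and the inserted edge triggers at most one of the three branches.

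The two light branches contribute only $O(\log n)$ each by Lemmas~\ref{lem:light:insertion:cost} and~\ref{lem:light:promotion:cost}, which provide deterministic upper bounds on the number of queries used. These contribute $O(\log n)$ to $\Ex{\mathcal{Q}(\text{Edge-Insertion}(e),G)}$ regardless of which branch fires.

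The substantive work is in the heavy-promotion branch, which I would analyse by decomposing by index pair $(i,j)$:
\[
\sum_{1 \le i \le j \le k} \Pr{u \in I_i \text{ and } v \in I_j} \cdot \Ex{\mathcal{Q}(\text{Heavy-Promotion}(v),G) \mid u \in I_i, v \in I_j}.
\]
Substituting the bounds, Lemma~\ref{lem:heavy:promotion:prob} gives the probability factor as at most $\frac{2}{c^2}\cdot\frac{n_i}{m_i}\cdot\frac{n_j}{m_j}$, while Lemma~\ref{lem:heavy:promotion:cost} gives a conditional expected cost of at most $c_{HP}\log n \cdot \frac{3c\log n \cdot m_i}{n_i}\cdot\frac{m_j}{n_j}$. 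The key observation is that these two estimates have reciprocal density factors: $\frac{n_i}{m_i}\cdot\frac{m_i}{n_i}=1$ and $\frac{n_j}{m_j}\cdot\frac{m_j}{n_j}=1$. Hence each term of the double sum collapses to an $O(\log^2 n)$ contribution with no surviving density dependence, and after summing over the relevant $(i,j)$ pairs and folding the combinatorial constants into $c$, the overall contribution is at most $c\log^2 n$.

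The main technical obstacle is legitimately using the unconditional expected cost from Lemma~\ref{lem:heavy:promotion:cost} as the conditional expectation inside the display above. That lemma's proof treats $v$ essentially as a uniform random vertex in order to apply Claim~\ref{clm:bound:neighor:size:expectation}, so one must argue that conditioning on $\{u\in I_i,\, v\in I_j\}$ does not distort the neighbourhood statistics of $v$ in $G_j$ in a way that spoils the bound. Because the independent sets $I_i, I_j$ are themselves built by independent uniform sampling in the offline algorithm, and the adversary fixes $e=(u,v)$ in advance of that randomness, this independence can be formalised and the conditional and unconditional bounds are compatible. Once that is verified, the density-factor cancellation is the heart of the argument, and the stated $c\log^2 n$ bound is immediate.
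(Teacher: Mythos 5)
Your decomposition into light insertion, light promotion, and heavy promotion, and the use of the reciprocal-density cancellation between Lemma~\ref{lem:heavy:promotion:prob} and Lemma~\ref{lem:heavy:promotion:cost}, is exactly the argument the paper uses. However, you deviate from the paper in one place, and that deviation introduces a gap that you then wave away without justification.

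You explicitly sum over all index pairs $1\le i\le j\le k$:
\[
\sum_{1 \le i \le j \le k} \Pr{u \in I_i \text{ and } v \in I_j} \cdot \Ex{\mathcal{Q}(\text{Heavy-Promotion}(v),G) \mid u \in I_i, v \in I_j}\enspace ,
\]
and you correctly observe that each summand is $O\bigl(\tfrac{\log^2 n}{c}\bigr)$ after the density factors cancel. But there are $\binom{k+1}{2}=\Theta(k^2)=\Theta(\log^2 n)$ summands, so what this double sum actually gives you is $O\bigl(\tfrac{\log^4 n}{c}\bigr)$, not $O(\log^2 n)$. Your phrase ``folding the combinatorial constants into $c$'' is the weak point: the number of pairs is not a constant, it is $\Theta(\log^2 n)$, and $c$ is a fixed constant ($c=200$) in the paper. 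To recover $O(\log^2 n)$ by this route you would need to show that the per-pair bound is a substantial overcount when aggregated, e.g.\ by noting that the events $\mathcal{HP}_{i\leftarrow j}$ are mutually exclusive and exploiting that $\sum_{i\le j}\Pr{\mathcal{HP}_{i\leftarrow j}}\le 1$ in a way that kills the cost factor --- but you do not do this, and it is not immediate because the conditional cost $\Ex{Q_{HP}\mid\mathcal{HP}_{i\leftarrow j}}$ varies with $(i,j)$.

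The paper avoids this by \emph{not} summing over $(i,j)$: it opens with ``Suppose that $u\in L_i$ and $v\in L_j$,'' i.e.\ it fixes the level pair (conditioning on the outcome of the level assignment for the two endpoints), observes that exactly one of $\mathcal{LI}_{i\leftarrow j}$, $\mathcal{LP}_{i\leftarrow j}$, $\mathcal{HP}_{i\leftarrow j}$ fires in that conditioned world, and gets a single $O(\log^2 n)$ term from the heavy branch. Since the conditioning events partition the probability space, this per-pair bound lifts to an unconditional bound without the $k^2$ blow-up. You should adopt that framing (or genuinely argue the double sum telescopes) rather than asserting the collapse. On the other hand, your remark about whether the unconditional expectation of Lemma~\ref{lem:heavy:promotion:cost} can be used in place of the conditional one is a legitimate subtlety that the paper does not address; it is worth keeping, but it is secondary to the missing justification for the sum.
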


\begin{proof}
Suppose that  $u \in L_i$ and $v \in L_j$ for $i \le j$. 
We define a random variable $X$ for the number of queries that the insertion of an arbitrary edge $e = (u,v)$ causes. 
In the following we study the expectation of $X$. 

We define a random events $\mathcal{LI}_{i \leftarrow j}(e)$, $\mathcal{LP}_{i \leftarrow j}(e)$, and $\mathcal{HP}_{i \leftarrow j}(e)$ 
when the insertion of an arbitrary edge $e=(u,v)$ triggers a light insertion, a light promotion and a heavy promotion, respectively. 
Observe that $\Pr{\mathcal{LI}_{i \leftarrow j}(e )} + \Pr{\mathcal{LP}_{i \leftarrow j}(e )} +  \Pr{\mathcal{HP}_{i \leftarrow j}(e)} = 1$. 
Recall that using Lemmas \ref{lem:light:insertion:cost}, \ref{lem:light:promotion:cost}, \ref{lem:heavy:promotion:cost}, $\mathcal{Q}(\text{Light-Insertion}(v),G) = c_{LI}\cdot \log n$, 
$\mathcal{Q}(\text{Light-Promotion}(v),G)  = c_{LP} \cdot \log n$, and 
$$
\Ex{\mathcal{Q}(\text{Heavy-Promotion}(v),G)}  \le c_{HP}\log n \cdot \Ex{ \sum_{w \in N_{G_j}(v)} d_{G_j}(w)} =  c_{HP}\log n \cdot  \frac{3c\log n \cdot m_i}{n_i} \cdot \frac{m_j}{n_j}\enspace .
$$

\[
\begin{split}
	\Ex{X} &=  \Pr{\mathcal{LI}_{i \leftarrow j}(e )} \cdot \mathcal{Q}(\text{Light-Insertion}(v)),G)
		 + \Pr{\mathcal{LP}_{i \leftarrow j}(e )} \cdot \mathcal{Q}(\text{Light-Promotion}(v),G)\\ 
		&+ \Pr{\mathcal{HP}_{i \leftarrow j}(e)} \cdot \Ex{\mathcal{Q}(\text{Heavy-Promotion}(v),G)} \\
		&\le (c_{LI} + c_{LP})\cdot \log n + \Pr{\mathcal{HP}_{i \leftarrow j}(e)} \cdot c_{HP}\log n \cdot  \frac{3c\log n \cdot m_i}{n_i} \cdot \frac{m_j}{n_j}\\
		&\le (c_{LI} + c_{LP} + c_{HP})\log n\cdot ( 1  + \frac{6\log n}{c}) \le \frac{7(c_{LI} + c_{LP} + c_{HP})}{c}\cdot \log^2 n \enspace ,
\end{split}
\]
since $\Pr{\mathcal{LI}_{i \leftarrow j}(e )} + \Pr{\mathcal{LP}_{i \leftarrow j}(e )} \le 1$ and 
according to Lemma \ref{lem:heavy:promotion:prob} the event $\mathcal{HP}_{i \leftarrow j}$ occurs for an arbitrary edge $e=(u,v)$ 
with probability $\Pr{\mathcal{HP}_{i \leftarrow j}} \le \frac{2}{c^2} \cdot \frac{n_i}{m_i} \cdot \frac{n_j}{m_j}$.

\end{proof}

\begin{corollary}
\label{cor:bound:expected:query}
Let $e=(u,v)$ be an arbitrary edge  that is deleted from a graph $G(V,E)$ whose level set is $\mathcal{L} = \cup_{i=1}^k L_i$.
The expected number of queries that Algorithm {\sf Edge-Deletion} makes to update the level set $\mathcal{L}$ is 
$\Ex{\mathcal{Q}(\text{Edge-Deletion}(e=(u,v))),G)} = c\cdot \log^2 n$ where $c$ is a large enough constant.  
\end{corollary}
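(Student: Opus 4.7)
The plan is to mirror the proof of Lemma \ref{lem:bound:expected:query} almost verbatim, exploiting the structural symmetry between the heavy-promotion trigger for insertions and the demotion trigger for deletions. For an arbitrary edge $e=(u,v)$ being deleted, inspection of {\sf Edge-Deletion} shows that the only case in which the algorithm performs more than $O(\log n)$ work is when there is a level $i$ with $u \in I_i$ and $v \in N(I_i) \backslash N(I_i \backslash \{u\})$, i.e.\ $u$ is the unique neighbor of $v$ inside $I_i$. In every other case the deletion just updates the adjacency list and cascades $O(\log n)$ density bookkeeping, costing at most $c_{LD}\cdot \log n$ queries for a suitably large constant $c_{LD}$ (analogous to Lemma \ref{lem:light:insertion:cost}).

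First I would bound the probability of this "heavy-deletion" event. Since $u \in I_i$ requires the Bernoulli sample of $u$ at level $i$ to succeed (probability $n_i/(c m_i)$), and $v \in N(I_i) \backslash N(I_i \backslash \{u\}) \subseteq N(I_i)$ forces $v$ to lie in the $\Theta(n_i)$-sized cut determined by $u$, a computation identical to Lemma \ref{lem:heavy:promotion:prob} combined with a union bound over levels yields
\[
\Pr{\exists i:\; u \in I_i \text{ and } v \in N(I_i)\setminus N(I_i\setminus\{u\})} \;\le\; \frac{2}{c^2}\cdot \sum_{i \in [k]} d^{-1}(G_i)\cdot d^{-1}(G_i).
\]

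Next I would bound the cost of {\sf Demotion}$(\mathcal{L},v,i)$ conditioned on this event. By inspection of the pseudocode, the work splits into the same two cases analyzed in Lemma \ref{lem:heavy:promotion:cost}: either $v$ has another neighbor in $\mathcal{I}$, in which case the routine does $O(\log n)$ bookkeeping after querying $v$'s neighborhood, or $v$ is resampled at some higher level $r\ge i$, accepted with probability $n_r/(cm_r)$, and its neighbors in $G_r$ are light-promoted at a total cost $O(d_{G_r}(v)\cdot \log n)$. In either subcase the expected cost is dominated by $c_{HP}\log n\cdot \Ex{\sum_{w \in N_{G_i}(v)} d_{G_i}(w)}$, and Claim \ref{clm:bound:neighor:size:expectation} bounds this by $c_{HP}\log n \cdot (m_i/n_i)^2$ (using Lemma \ref{lem:high:degree:low:prob} to handle the $i<r$ degree bound with failure probability $1/n^2$ absorbed).

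Plugging these two ingredients into the same calculation used in Lemma \ref{lem:bound:expected:query},
\[
\Ex{\mathcal{Q}(\text{Edge-Deletion}(e),G)} \;\le\; c_{LD}\log n \;+\; \sum_{i}\frac{2}{c^2}\cdot \frac{n_i^2}{m_i^2}\cdot c_{HP}\log n \cdot \frac{m_i^2}{n_i^2} \cdot O(\log n) \;=\; O(\log^2 n),
\]
yields the desired bound $c\cdot \log^2 n$. The main obstacle I expect is justifying the probability estimate cleanly when $(u,v)$ is "arbitrary but fixed" against the randomness of sampling; one must argue that the randomness of $I_i$ is independent of the choice of $e$, so that the event $v \in N(I_i)\setminus N(I_i\setminus\{u\})$ can be decoupled as "$u$ sampled" and "$v$ is $u$'s unique MIS-neighbor in $G_i$" without circular conditioning. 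Once that is handled, the deletion bound follows immediately from the insertion analysis.
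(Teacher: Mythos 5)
The paper's own ``proof'' is a one-line deferral to Lemma~\ref{lem:bound:expected:query}, so your attempt to actually carry out that mirroring is in the right spirit. The high-level plan is correct: isolate the single expensive case in {\sf Edge-Deletion}, bound its probability, bound the cost of the triggered {\sf Demotion}, and multiply. However, you have imported both the probability bound and the cost bound from the insertion analysis verbatim, and neither transfers correctly: the deletion event is structurally asymmetric, not symmetric.

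\emph{Probability.} The trigger for {\sf Demotion} under a deletion is $u\in I_i$ and $v\in N(I_i)\setminus N(I_i\setminus\{u\})$. Only \emph{one} endpoint needs to land in the MIS (the other is then automatically in $N(I_i)$ because $(u,v)$ is an existing edge). Hence the probability is a \emph{first} power, roughly $\frac{n_i}{cm_i}$, not the $\frac{2}{c^2}\cdot d^{-1}(G_i)^2 = \frac{2n_i^2}{c^2m_i^2}$ you wrote. The squared bound in Lemma~\ref{lem:heavy:promotion:prob} arises precisely because a heavy promotion requires \emph{both} $u\in I_i$ and $v\in I_j$ to be independently sampled; that two-sided condition has no analogue here.

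\emph{Cost.} {\sf Edge-Deletion} calls {\sf Demotion} \emph{once}, on $v$ alone. A single call costs $O(d_{G_r}(v)\cdot\log n)$: you scan $v$'s neighborhood to compute $P(v)$, and if $v$ is resampled into some $I_r$ you light-promote each of its $d_{G_r}(v)$ neighbors at $O(\log n)$ each. This is a first power in the degree. The bound $c_{HP}\log n\cdot\Ex{\sum_{w\in N_{G_i}(v)}d_{G_i}(w)}$ that you quote via Claim~\ref{clm:bound:neighor:size:expectation} is the cost of {\sf Heavy-Promotion}, which calls {\sf Demotion} on \emph{every} vertex $w\in F\subseteq N_{G_j}(v)$---a $|F|$-fold blow-up that does not occur during a deletion.

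These two errors---probability undercounted by a factor of roughly $d^{-1}(G_i)$, cost overcounted by a factor of roughly $d(G_i)$---happen to cancel, so your final line still produces $O(\log^2 n)$. But the intermediate claims are both incorrect, and one cannot simply invoke ``a computation identical to Lemma~\ref{lem:heavy:promotion:prob}.'' The correct bookkeeping is: probability $\lesssim \frac{n_i}{cm_i}$, cost of one {\sf Demotion} $\lesssim c_{HP}\log n\cdot\frac{3c\log n\cdot m_i}{n_i}$ (using Lemma~\ref{lem:high:degree:low:prob} to bound the relevant degree), whose product is $O(\log^2 n)$ as required.
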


The proof of this corollary is the same as the proof of Lemma \ref{lem:bound:expected:query} and we omit it here.

\section{Analysis for Stream of Insertions and Deletions}
\label{sec:main:alg}
In this section we present our dynamic algorithm. 
The pseudocode of this algorithm is given in below. 
Lemma \ref{thm:mis:dynamic} proves that the amortized update time of this algorithm is $O(\log^3 n)$. 

\begin{algorithm*}
\label{alg:dynamic}
\noindent
\textbf{Input:}  A Sequence $\mathcal{S} = \{\textrm{Update}(e_1=(u_1,v_1)), \cdots, \textrm{Update}(e_z=(u_z,v_z))\}$ of edge updates to a graph $G$ 
where $\textrm{Update}(e_{\ell})$ is either $\textrm{Insertion}(e_{\ell})$ 
or $\textrm{Deletion}(e_{\ell})$.
\begin{algorithmic}[1]
	\STATE Let $G(V,E)$ be undirected unweighted graph whose vertex set $V$ of size $n=|V|$ is fixed and edge set $E$ that can be initialized to an empty set. 
	\STATE Initialize $y = 4\log(n/\delta)$ runs $R_1,\cdots,R_y$ in parallel.
	\FOR{$R_r$ where $r \in [y]$ in parallel}
		\STATE Invoke Algorithm (\ref{alg:offline:mis}) {\sf Maximal-Independent-Set$(G(V,E))$} whose output is a level set $\mathcal{L}^r = \cup_{i=1}^k L_i$. 
		\FOR{each update $\textrm{Update}(e_{\ell})$}
			\IF{$\textrm{Update}(e_{\ell})$ is $\textrm{Insertion}(e_{\ell})$}
				\STATE Invoke {\sf $\mathcal{L}^r = $ Edge-Insertion$(e_{\ell})$}. 
			\ELSE
				\STATE Invoke {\sf $\mathcal{L}^r = $ Edge-Deletion$(e_{\ell})$}. 
			\ENDIF
			\IF{there exists a level $L_i$ whose density $\frac{m_i}{n_i}$ changes by a factor $2$}
				\STATE Invoke Algorithm (\ref{alg:offline:mis}) {\sf Maximal-Independent-Set$(G_i(V_i,E_))$} that updates level set $\mathcal{L}^r = \cup_{i=1}^k L_i$.
			\ENDIF
		\ENDFOR
		\IF{the number of queries $\mathcal{Q}(\mathcal{A},G)$ that the run $R_r$ made up to now is greater than $3cz\cdot \log^3 n$}
			\STATE Stop the run $R_r$. 
		\ENDIF
	\ENDFOR
\end{algorithmic}
\noindent
\textbf{Output:} At any time $t\in [z]$, report the MIS maintained by a level set $\mathcal{L}^r = \cup_{i=1}^k L_i$ whose run $R_r$ survives. 
\caption{Dynamic-MIS}
\end{algorithm*}

\begin{theorem}
\label{thm:mis:dynamic}
Let $G=(V,E)$ be an undirected unweighted graph whose level set is $\mathcal{L} = \cup_{i=1}^k L_i$.
Let $\mathcal{S} = \{\textrm{Update}(e_1=(u_1,v_1)), \cdots, \textrm{Update}(e_z=(u_z,v_z))\}$ be a sequence of edge updates to a graph $G$ 
where $\textrm{Update}(e_{\ell}=(u_{\ell},v_{\ell}))$ is either $\textrm{Insertion}(e_{\ell}=(u_{\ell},v_{\ell}))$ 
or $\textrm{Deletion}(e_{\ell}=(u_{\ell},v_{\ell}))$.
Let $0 < \delta <1$ be a parameter. 
There is a randomized algorithm that with probability at least $1-\delta/n^2$, applies this sequence of updates to $G$ and 
updates the level set $\mathcal{L}$ in time $O(z\cdot \log^3 n)$. 
That is, the amortized update time of this algorithm is  $O(\log^3 n)$. 
\end{theorem}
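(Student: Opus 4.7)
The plan is three-fold: bound the expected per-update work inside each parallel run (including the occasional density-change rebuild), convert that into a constant-probability survival bound via Markov's inequality and the kill threshold built into Algorithm~\ref{alg:dynamic}, and then boost across the $y = 4\log(n/\delta)$ independent parallel runs so that with probability $\ge 1 - \delta/n^2$ at least one run both survives and outputs a valid MIS at every time $t$.

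First I would fix a single run $R_r$ and upper bound its expected total work over the whole sequence $\mathcal{S}$. By Lemma~\ref{lem:bound:expected:query} and Corollary~\ref{cor:bound:expected:query}, each update contributes $O(\log^2 n)$ expected queries inside {\sf Edge-Insertion} / {\sf Edge-Deletion}, giving $O(z \log^2 n)$ expected queries from those subroutines summed over $z$ updates. I then have to account for the density-change rebuild step, which reinvokes Algorithm~\ref{alg:offline:mis} on a sub-instance whenever some $m_i / n_i$ shifts by a factor of two. My plan here is an amortization argument: a single level-$i$ rebuild costs $O(m_i \log n)$ work (the per-epoch effort telescopes over the $O(\log n)$ levels below, using the geometric shrinkage guaranteed by Lemma~\ref{lem:neighbor:IS:large} and the degree bound of Lemma~\ref{lem:high:degree:low:prob}), while $\Omega(m_i)$ update operations must touch the level-$i$ sub-instance before its density can double or halve, so the rebuild is paid by charging $O(\log n)$ to each such operation; since one update can trigger rebuilds at only $O(\log n)$ levels above its endpoints, the total amortized rebuild contribution is $O(\log^2 n)$ per update. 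Thus the expected total work of $R_r$ is at most $c z \log^2 n$ for a constant $c$.

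Next I would apply Markov's inequality: the probability that $R_r$ exceeds $3 c z \log^3 n$ queries is at most $1/(3 \log n) \le 1/3$, so $R_r$ is not killed by the threshold in Algorithm~\ref{alg:dynamic} with probability at least $2/3$. Independently, using the boosting remark following Lemma~\ref{lem:neighbor:IS:large} together with Lemma~\ref{lem:high:degree:low:prob} and a union bound over the $z \le \binom{n}{2}$ updates and the $O(\log n)$ levels, the set $\cup_i I_i$ maintained by $R_r$ remains a valid MIS at every time $t$ with probability $\ge 1 - 1/n^3$. Launching $y = 4\log(n/\delta)$ independent runs, the probability that every run is killed is at most $(1/3)^{4 \log(n/\delta)} \le \delta/(2n^2)$ for a suitable base of the logarithm, and combining with a union bound for validity across the $y$ surviving runs yields overall success probability $\ge 1 - \delta/n^2$. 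The kill threshold ensures that any surviving run performs at most $O(z \log^3 n)$ work, so the amortized update time is $O(\log^3 n)$.

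The main obstacle is formalizing the amortized accounting for the density-change rebuilds, since a single edge update can in principle cascade rebuilds through several levels; the argument must lean carefully on the level structure (geometric decrease of $n_i$ and controlled growth of $m_i$ established earlier in Section~\ref{sec:mis}) and on the fact that triggering a level-$i$ rebuild requires $\Omega(m_i)$ prior updates to that sub-instance, so that the amortized per-update rebuild cost stays at $O(\log^2 n)$ rather than blowing up to $\Omega(m_i)$ in the worst case.
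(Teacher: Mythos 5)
Your overall structure mirrors the paper's proof closely: use Lemma~\ref{lem:bound:expected:query} and Corollary~\ref{cor:bound:expected:query} to get $O(\log^2 n)$ expected queries per update inside one run, apply Markov's inequality against the kill threshold of Algorithm~\ref{alg:dynamic} to get a constant per-run survival probability, and boost over $y = 4\log(n/\delta)$ parallel runs. One cosmetic difference: the paper boosts via the additive Chernoff bound (Lemma~\ref{lem:cher}) to argue that at least half the runs survive, whereas you argue directly that the probability all $y$ runs are killed is $(1/3)^y$; both give the needed polynomially small failure probability, and yours is the simpler of the two.

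Where you genuinely diverge from the paper is in your treatment of the density-change rebuild step (line~11--12 of Algorithm~\ref{alg:dynamic}). The paper's proof \emph{silently omits} this cost: the random variables $X_\ell$ are bounded using only Lemmas~\ref{lem:bound:expected:query} and \ref{cor:bound:expected:query}, which cover the queries made inside {\sf Edge-Insertion}/{\sf Edge-Deletion} but say nothing about the reinvocations of Algorithm~\ref{alg:offline:mis} triggered when some $m_i/n_i$ moves by a factor of two. You correctly identify this as a gap and sketch a charging argument: a level-$i$ rebuild costs $O(m_i \log n)$, needs $\Omega(m_i)$ touching operations before the density can move by a factor of two, one update touches $O(\log n)$ levels, hence $O(\log^2 n)$ amortized. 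This is the right instinct, but as you yourself flag it remains informal, and there is a real subtlety you do not address: the density $m_i/n_i$ can change because $n_i$ changes (vertices being promoted into or demoted out of level~$i$ by light/heavy promotions and demotions), not only because $m_i$ changes via explicit edge updates to the level-$i$ subgraph. A single heavy promotion can move $\Theta(d_{G_j}(v))$ vertices across levels, so the claim ``$\Omega(m_i)$ update operations must touch the level-$i$ sub-instance before its density can double or halve'' needs a potential-function argument that simultaneously controls both the numerator and the denominator, and that this in turn ties back to the degree bound of Lemma~\ref{lem:high:degree:low:prob}. Until that is nailed down, your proposal (like the paper's own proof) has not fully justified the stated amortized bound; the difference is that you noticed the hole and the paper did not.

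One small bookkeeping point: you carry the $\log^3 n$ threshold from the algorithm through Markov to get failure probability $1/(3\log n) \le 1/3$, but if each of the $y = \Theta(\log(n/\delta))$ parallel runs can accrue $\Theta(z\log^3 n)$ queries before being killed, the total work across all runs is $\Theta(z \log^4 n)$ for constant $\delta$, not $O(z\log^3 n)$. The paper's own Markov step uses $3cz\log^2 n$, which suggests the $\log^3 n$ in the algorithm's threshold is a typo for $\log^2 n$; with the $\log^2 n$ threshold, the account closes to the claimed $O(z\log^3 n)$ total. Worth flagging either way.
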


\begin{proof}
%Let $e_1=(u_1,v_1), \cdots, e_z=(u_z,v_z)$ be a sequence of edges inserted in a graph $G$ whose level set is $\mathcal{L} = \cup_{i=1}^k L_i$.
%Observe that given $\mathcal{L}$, the reported maximal independent set is $\mathcal{I} = \cup_{i=1}^k I_i$. 
Let us define $z$ random variables $X_1, \cdots, X_z$ corresponding to these edge updates  
where $X_{\ell}$ corresponds to the number of queries that the update of the edge $e_{\ell} = (u_{\ell},v_{\ell})$ needs  
to update the level set $\mathcal{L} = \cup_{i=1}^k L_i$. 
From Lemma \ref{lem:bound:expected:query} and Corollary \ref{cor:bound:expected:query}, we have 
$$
\Ex{X_{\ell}} \le \max(\Ex{\mathcal{Q}(\text{Edge-Insertion}(e_{\ell})),G)} , \Ex{\mathcal{Q}(\text{Edge-Deletion}(e_{\ell} )),G)}) = c\cdot \log^2 n \enspace. 
$$

Let $X = \sum_{\ell =1}^z X_{\ell}$. We then have $\Ex{X} = cz\cdot \log^2 n$. 
Using Markov Inequality, $$\Pr{X \ge 3cz\cdot \log^2 n} \le 1/3 \enspace .$$

Next we increase the probability of correctness to $1-\delta/n^3$. 
For the sequence $\textrm{Update}(e_1=(u_1,v_1)), \cdots, \textrm{Update}(e_z=(u_z,v_z))$ of edge updates, we run $y = 4\log(n/\delta)$ 
instances of Algorithms {\sf Edge-Insertion} and {\sf Edge-Deletion}  in parallel. Let $R_1,\cdots, R_y$ be the set of these $y$ runs.  
At any time $1 \le t \le z$, if we observe that the sum of the number of queries that a run $R_r$ makes from the beginning of the sequence (i.e., time $1$)
up to $t$ is greater than $3cz\cdot \log^2 n$, we stop the run $R_r$. 

Let $Y_r$ corresponds to the run $R_r$ such that $Y_r=1$ if for the $r$-th run 
the sum of the number of queries that $R_r$ makes from time $1$ to $t$ is is greater than $3cz\cdot \log^2 n$, and $Y_r=0$ otherwise. 
Therefore, $\Ex{Y_r} = p \le 1/3$. Let $a=1/2$ and $Y=\sum_{r=1}^y Y_r$. Using additive Chernoff Bound \ref{lem:cher} we then have, 
\[
	\Pr{Y \ge y/2} \le \left[ (\frac{p}{a})^a\ \cdot \left(\frac{1-p}{1-a}\right)^{1-a}\right]^{y} \le \left[ \sqrt{2/3} \cdot \sqrt{\frac{2/3}{1/2}}\right]^y \le (\sqrt{8/9})^y \le \delta/n^4 \enspace ,
\]
for $y \ge 4\log_{\sqrt{9/8}}(n/\delta) $. Be the relation between logarithms we then have $y \ge 4\log(n/\delta)$. 

We can assume that $z \le n(n+1)/2=n^2/2+n/2 \le n^2$. Since after every $n^2$ updates we re-run the MIS algorithm (i.e., Algorithm \ref{alg:offline:mis}) 
from the beginning. Therefore, using a union bound, with probability at least $1-\delta/n^2$ after every update there exists at least 
one run that survives. 
\end{proof}

%===========================================================================
%= REFERENCES
%===========================================================================

\newcommand{\Proc}{Proceedings of the~}

\newcommand{\STOC}{Annual ACM Symposium on Theory of Computing (STOC)}
\newcommand{\FOCS}{IEEE Symposium on Foundations of Computer Science (FOCS)}
\newcommand{\SODA}{Annual ACM-SIAM Symposium on Discrete Algorithms (SODA)}
\newcommand{\SOCG}{Annual Symposium on Computational Geometry (SoCG)}
\newcommand{\ICALP}{Annual International Colloquium on Automata, Languages and Programming (ICALP)}
\newcommand{\ESA}{Annual European Symposium on Algorithms (ESA)}
\newcommand{\CCC}{Annual IEEE Conference on Computational Complexity (CCC)}
\newcommand{\RANDOM}{International Workshop on Randomization and Approximation Techniques in Computer Science (RANDOM)}
\newcommand{\APPROX}{International Workshop on  Approximation Algorithms for Combinatorial Optimization Problems  (APPROX)}
\newcommand{\PODS}{ACM SIGMOD Symposium on Principles of Database Systems (PODS)}
\newcommand{\SSDBM}{ International Conference on Scientific and Statistical Database Management (SSDBM)}
\newcommand{\ALENEX}{Workshop on Algorithm Engineering and Experiments (ALENEX)}
\newcommand{\BEATCS}{Bulletin of the European Association for Theoretical Computer Science (BEATCS)}
\newcommand{\CCCG}{Canadian Conference on Computational Geometry (CCCG)}
\newcommand{\CIAC}{Italian Conference on Algorithms and Complexity (CIAC)}
\newcommand{\COCOON}{Annual International Computing Combinatorics Conference (COCOON)}
\newcommand{\COLT}{Annual Conference on Learning Theory (COLT)}
\newcommand{\COMPGEOM}{Annual ACM Symposium on Computational Geometry}
\newcommand{\DCGEOM}{Discrete \& Computational Geometry}
\newcommand{\DISC}{International Symposium on Distributed Computing (DISC)}
\newcommand{\ECCC}{Electronic Colloquium on Computational Complexity (ECCC)}
\newcommand{\FSTTCS}{Foundations of Software Technology and Theoretical Computer Science (FSTTCS)}
\newcommand{\ICCCN}{IEEE International Conference on Computer Communications and Networks (ICCCN)}
\newcommand{\ICDCS}{International Conference on Distributed Computing Systems (ICDCS)}
\newcommand{\VLDB}{ International Conference on Very Large Data Bases (VLDB)}
\newcommand{\IJCGA}{International Journal of Computational Geometry and Applications}
\newcommand{\INFOCOM}{IEEE INFOCOM}
\newcommand{\IPCO}{International Integer Programming and Combinatorial Optimization Conference (IPCO)}
\newcommand{\ISAAC}{International Symposium on Algorithms and Computation (ISAAC)}
\newcommand{\ISTCS}{Israel Symposium on Theory of Computing and Systems (ISTCS)}
\newcommand{\JACM}{Journal of the ACM}
\newcommand{\LNCS}{Lecture Notes in Computer Science}
\newcommand{\RSA}{Random Structures and Algorithms}
\newcommand{\SPAA}{Annual ACM Symposium on Parallel Algorithms and Architectures (SPAA)}
\newcommand{\STACS}{Annual Symposium on Theoretical Aspects of Computer Science (STACS)}
\newcommand{\SWAT}{Scandinavian Workshop on Algorithm Theory (SWAT)}
\newcommand{\TALG}{ACM Transactions on Algorithms}
\newcommand{\UAI}{Conference on Uncertainty in Artificial Intelligence (UAI)}
\newcommand{\WADS}{Workshop on Algorithms and Data Structures (WADS)}
\newcommand{\SICOMP}{SIAM Journal on Computing}
\newcommand{\JCSS}{Journal of Computer and System Sciences}
\newcommand{\JASIS}{Journal of the American society for information science}
\newcommand{\PMS}{ Philosophical Magazine Series}
\newcommand{\ML}{Machine Learning}
\newcommand{\DCG}{Discrete and Computational Geometry}
\newcommand{\TODS}{ACM Transactions on Database Systems (TODS)}
\newcommand{\PHREV}{Physical Review E}
\newcommand{\NATS}{National Academy of Sciences}
\newcommand{\MPHy}{Reviews of Modern Physics}
\newcommand{\NRG}{Nature Reviews : Genetics}
\newcommand{\BullAMS}{Bulletin (New Series) of the American Mathematical Society}
\newcommand{\AMSM}{The American Mathematical Monthly}
\newcommand{\JAM}{SIAM Journal on Applied Mathematics}
\newcommand{\JDM}{SIAM Journal of  Discrete Math}
\newcommand{\JASM}{Journal of the American Statistical Association}
\newcommand{\AMS}{Annals of Mathematical Statistics}
\newcommand{\JALG}{Journal of Algorithms}
\newcommand{\TIT}{IEEE Transactions on Information Theory}
\newcommand{\CM}{Contemporary Mathematics}
\newcommand{\JC}{Journal of Complexity}
\newcommand{\TSE}{IEEE Transactions on Software Engineering}
\newcommand{\TNDE}{IEEE Transactions on Knowledge and Data Engineering}
\newcommand{\JIC}{Journal Information and Computation}
\newcommand{\ToC}{Theory of Computing}
\newcommand{\MST}{Mathematical Systems Theory}
\newcommand{\Com}{Combinatorica}
\newcommand{\NC}{Neural Computation}
\newcommand{\TAP}{The Annals of Probability}
\newcommand{\TCS}{Theoretical Computer Science}
\newcommand{\IPL}{Information Processing Letter}
\newcommand{\Algorithmica}{Algorithmica}

\nocite{*}
%\bibliographystyle{plain}
%\bibliography{matching}

\end{document}